\documentclass[12pt]{article}

\usepackage{placeins}
\usepackage{pifont}
\usepackage{graphicx}
\usepackage{booktabs}
\usepackage{multirow}
\usepackage{amsmath}
\usepackage{amssymb}
\usepackage{lmodern}
\usepackage{rotating}
\usepackage{float}
\usepackage{lscape}
\usepackage{mathtools}
\usepackage{amsthm}
\usepackage{longtable}
\usepackage[T1]{fontenc}
\usepackage[a4paper,left=2.5cm,right=2.5cm,top=3cm,bottom=3cm]{geometry}
\usepackage[flushleft]{threeparttable}

\usepackage[style=authoryear-comp, backend=bibtex, maxcitenames=2, maxbibnames=99 ,dashed=false]{biblatex}
\bibliography{self_matching}

\usepackage{hyperref}

\newtheorem*{theorem}{Theorem}

\begin{document}

\title{The Symmetric Pair Matching Design: A Self-Controlled Method with Automatic Adjustment for Time Effects}
\date{}
\author{Robin Denz, Filippo Saatkamp, Katharina Meiszl, Nina Timmesfeld \\ \\ Ruhr-University Bochum \\ Department of Medical Informatics, Biometry and Epidemiology}

\maketitle

\begin{abstract}
	Self-controlled study designs eliminate confounding by individual-level characteristics that remain constant during the observation time and are thus widely used in pharmacoepidemiology and vaccine safety research. However, existing methods remain vulnerable to time effects, including temporal trends and seasonality in the exposure or outcome, unless these are explicitly modeled or controlled through the study design. We introduce the symmetric pair matching design (SPM), a novel self-controlled method that combines design-based adjustment for time effects with automatic control of time-invariant confounding. Unlike previous design-based approaches that account for time effects, SPM uses observation time both before and after event occurrence, thereby retaining a larger proportion of the available information. We derive the theoretical properties of the method and evaluate its finite-sample performance through simulations. In the simulations, SPM produced unbiased estimates in the presence of temporal trends in both the outcome and exposure, while maintaining greater or comparable statistical efficiency than existing approaches. SPM extends the family of self-controlled methods by providing design-based control of time effects without requiring explicit modeling thereof. By combining robustness to temporal confounding with efficient use of observation time, SPM offers a practical alternative for observational studies with transient exposures. An accompanying R package is provided to facilitate its usage in applied research.
	\par\medskip
	\emph{Keywords: self-controlled, time effects, time confounding, causal inference} 
\end{abstract}

\newpage

\section{Introduction} \label{chap::introduction}

Confounding remains one of the biggest issues for causal effect estimation with observational data. Many analytic approaches, including regression adjustment, inverse probability weighting or matching methods, rely on the assumption that all relevant confounders are known and measured without error \parencite{Hernan2020}. In practice, this assumption is frequently violated, resulting in potential biases in causal effect estimation. Self-controlled study designs partially adress this issue, by using each individual as their own control at different times, thereby eliminating confounding by all characteristics that remain constant during the observation period, measured and unmeasured \parencite{Bots2025a}.
\par\medskip
Among such designs are the self-controlled case series (SCCS) design \parencite{Farrington1995}, in which the observation time of an individual is divided based on the exposure status and the case-crossover (CCO) design \parencite{Maclure1991}, which additionally anchors its estimation on the occurrence of the first event. These designs are applicable when the exposure of interest is transient, meaning that its effect is only temporary. As such, these designs are particularly useful for pharmacoepidemiological \parencite{Hallas2014, Bots2025a} or vaccine safety research \parencite{Nie2022}.
\par\medskip
Although self-controlled methods inherently control for time-invariant confounding, they remain susceptible to confounding by both time-varying individual-level factors and to general time trends in the exposure or outcome. The CCO design, for example, has been shown to be biased if exposure trends are present \parencite{Suissa1995, Greenland1996} Furthermore, if both the exposure and the outcome exhibit seasonal trends simultaneously, both the SCCS and CCO method may produce biased estimates. These ``time effects'' have been studied extensively using both simulations and empirical analyses \parencite{Suissa1995, Greenland1996, Wang2011, Bateson1999, Takeuchi2018, Mostofsky2018, Dong2020}.
\par\medskip
Existing self-controlled methods therefore require explicit modeling or design-based control of such time effects. In the SCCS method age or season effects are usually adjusted using parametric modelling \parencite{Farrington2018} or through the use of smoothing splines \parencite{GhebremichaelWeldeselassie2017}. Although the spline based extension is more flexible than regular parametric adjustment, it still requires multiple modeling assumptions. In contrast, extentions of CCO design, such as the Case-Time Control design (CTC) \parencite{Suissa1995} and the Case-Case Time Control design (CCTC) \parencite{Wang2011}, do not require such explicit modeling of time effects. Here, the design itself is constructed so that time effects cancel out naturally. However, unlike the SCCS design, these methods only use a small fraction of the observation time and are thus generally less efficient in scenarios where the SCCS could also be used \parencite{Takeuchi2018}.
\par\medskip
In this article we introduce the symmetric pair matching (SPM) design, a new self-controlled method that automatically adjusts for time-invariant confounding, while additionally providing design-based adjustment for time effects. In contrast to the CTC \parencite{Suissa1995} and CCTC methods \parencite{Wang2011}, it more closely resembles the SCCS design \parencite{Farrington1995} than the CCO design \parencite{Maclure1991}, because it allows usage of time before and after event occurrences. Thus, a larger proportion of the observation time may be used when appropriate, providing a substantive gain in efficiency. To make the method easy to apply for users, we also offer a software implementation in the form of the \texttt{SPMD} R package, which is freely available on Github (\url{https://github.com/RobinDenz1/SPMD}).
\par\medskip
First, we briefly describe some relevant existing self-controlled methods. We then introduce the new method and present a brief simulation study comparing it to current standards. Finally, we discuss the advantages and disadvantages of SPM.

\section{Background}

\subsection{Setting and Notation}

We consider an observational cohort study consisting of $n$ individuals who are followed over time. During follow-up, individuals may experience recurrent outcome events and may also become exposed to a transient risk factor of interest. Let $A_i(t)$ denote the exposure status of individual $i$ at time $t$, where $A_i(t)$ equals 1 only at the time of exposure and is 0 otherwise. Throughout this paper, we assume that the exposure has a constant multiplicative effect on the outcome incidence during a constant risk period and no lingering effects thereafter. Although multiple exposure periods per individual are permitted, we primarily describe the methods for the simpler setting in which each individual experiences at most one exposure period, to simplify notation and presentation (see appendix~\ref{appendix::exposure_times}).
\par\medskip
The outcome is assumed to be recurrent, meaning that an individual may experience multiple events during follow-up. Complete information on exposure times and timings of outcome events is assumed to be available between time zero and some maximum follow-up time $t_{max}$, or until individual loss-to-follow-up (right-censoring). Time zero may correspond to the date of study inclusion for prospective studies or to the first time an individual was eligible, a relevant diagnosis date or similar dates of interest. This method is motivated by a study of \textcite{Meiszl2026}, which aimed to show whether administration of an influenza vaccination ($A_i(t)$) raised the probability of a disease-related emergency hospitalization in individuals with immune-mediated diseases.

\subsection{Target Estimand} \label{chap::target_estimand}

Our goal is to estimate the causal exposure effect during a known risk period after exposure. To define this quantity, we use the potential outcomes framework \parencite{Neyman1935, Rubin1974}. Let $T_A$ denote an individual's observed exposure time, assuming that each individual is exposed once at most, and let $\tau$ denote the prespecified duration of the post-exposure risk period. Let $N(t)$ denote the observed counting process of event occurences. For exposed individuals, let $N^0(t)$ denote the counterfactual counting process that would have been observed if the exposure at time $T_A$ had been prevented through an external intervention. Under the consistency assumption \parencite{Hernan2016, Pearl2018}, the observed counting process equals the potential counting process under the observed exposure history.
\par\medskip
The target estimand is the average causal risk ratio among exposed individuals, comparing the expected number of events occuring during the risk period under exposure and without exposure at the observed exposure times:

\begin{equation} \label{eq::target_estimand}
	\exp(\theta) =\frac{E\!\left(N(T_A+\tau)-N(T_A) \mid T_A < t_{max}\right)}{E\!\left(N^{0}(T_A+\tau)-N^{0}(T_A) \mid T_A < t_{max} \right)}.
\end{equation}

Similar estimands have been defined specifically for the SCCS \parencite{Etievant2026} and CCO designs \parencite{Shahn2023}. Because the observed counting process corresponds to the numerator of (\ref{eq::target_estimand}) under consistency, the primary estimation challenge is the identifaction of the denominator. Several causal identifiability assumptions need to be made to make this possible using observable data. Those are: (a) the previously mentioned \emph{counterfactual consistency} assumption, stating that the observed outcome process corresponds to the potential outcome process under the actually experienced exposure history \parencite{Hernan2016, Pearl2018}; (b) the \emph{positivity} assumption, requiring that exposure and non-exposure both occur with positive probability \parencite{Westreich2010}; (c) the \emph{no interference} assumption, stating that one individuals exposure status does not affect the potential outcome of other individuals \parencite{Naimi2015} and (d) \emph{exchangeability} \parencite{Sarvet2020}, stating that the counterfactual outcome process under no exposure is independent of the exposure process. These assumptions are described in more detail elsewhere \parencite{Hernan2020, Guo2014}. Additional assumptions specific to the discussed estimators are introduced in the following section.

\subsection{Existing Self-Controlled Methods}

Several self-controlled study designs have been proposed to estimate the causal effect of a transient exposures on acute outcomes. Although they share the same underlying principle, they differ in how they identify the counterfactual outcome process under no exposure. Below we briefly review the methods most relevant to the present work. This review is not meant to be exhaustive; comprehensive reviews are provided by \textcite{Takeuchi2018} or \textcite{Bots2025a}. 
\par\medskip
The SCCS design compares the incidence of events during prespecified post-exposure risk periods with the incidence of the remaining observation time within the same individual \parencite{Farrington1995}. As in all self-controlled designs, all fixed measured and unmeasured confounders are controlled for by design, because each individual serves as their own control. Identification requires that the occurrence of the outcome does not influence subsequent exposure, observation time or censoring. These assumptions have partially been relaxed through several extensions of the original SCCS method \parencite{Farrington2011, Petersen2016}. In its simplest form, the SCCS additionally assumes a constant outcome probability over the observation period. Extensions allow time effects through explicit parametric modelling \parencite{Farrington2018} or the use of smoothing splines \parencite{GhebremichaelWeldeselassie2017}. Although these approaches substantially increase flexibility, they require the analyst to specify the functional form or to choose tuning parameters, such as the number and location of spline knots.
\par\medskip
The CCO design compares the exposure status immediately preceding an event (the case period) with the exposure status during one or more control periods for the same individual~\parencite{Maclure1991}. The design is therefore based on retrospective sampling around outcome events, rather than anchoring its estimation on the exposure itself \parencite{Cadarette2021}. Under the assumption that the baseline exposure probability remains constant across the sampled periods, the resulting odds ratio estimates the causal risk ratio associated with the exposure. The design is therefore primarily intended for transient exposures and outcomes with an abrupt onset. Multiple control time sampling strategies \parencite{Navidi1998, Bateson1999} and extensions that relax the exposure trend assumption in this design have been proposed.
\par\medskip
One of these extensions is the CTC design. It extends the CCO design by introducing an external control group to account for temporal time trends in exposure prevalence \parencite{Suissa1995, Suissa1998}. Exposure odds in the control group are used to estimate the population-level exposure trend, which is subsequently removed from the crossover estimate. It thus assumes that the exposure time trends is comparable between controls and cases. The CCTC design replaces external controls with future cases sampled from the same source population to relax this assumption \parencite{Wang2011}. Both of these methods do not require explicit modelling of time trends. All time effects simply cancel out by design, under the required assumptions.
\par\medskip
Below we propose a new method that combines the purely exposure centric estimation strategy of the SCCS design with the design based adjustment used in the CTC and CCTC designs.

\section{The Proposed Method}

\subsection{How it Works}

In the proposed method, $m$ 1:1 matched pairs of exposed individuals are created first. Inside a pair, each individual acts as control for the other individual at their respective exposure time. Only pairs where this is possible are used. For example, consider the pairing of individuals $a$ and $b$, where individual $a$ was exposed at $t = 50$ and individual $b$ was exposed at $t = 120$, and suppose that $\tau = 40$. We would then use the time of individual $b$ at $t \in [50, 90]$ as control for the risk period after exposure of individual $a$ at $t \in [50, 90]$ and do the reverse at $t = 120$, when individual $b$ is exposed instead. Figure~\ref{fig::pair_example} depicts this pair graphically. If individual $b$ was exposed at $t = 30$ instead, we would not be able to form this pair, because the risk periods would overlap. Such pairings are said to be \emph{invalid} pairs. Formally, a pair $(a, b)$ with respective exposure times $t_1$ and $t_2$ is valid only if $|t_1 - t_2| > \tau$ and if the durations $[t_1, t_1 + \tau]$ and $[t_2, t_2 + \tau]$ are fully observed for both $a$ and $b$. These requirements may be relaxed, as discussed in appendix~\ref{appendix::valid_pairs}.

\begin{figure}[!htb]
	\centering
	\includegraphics[width=1\linewidth]{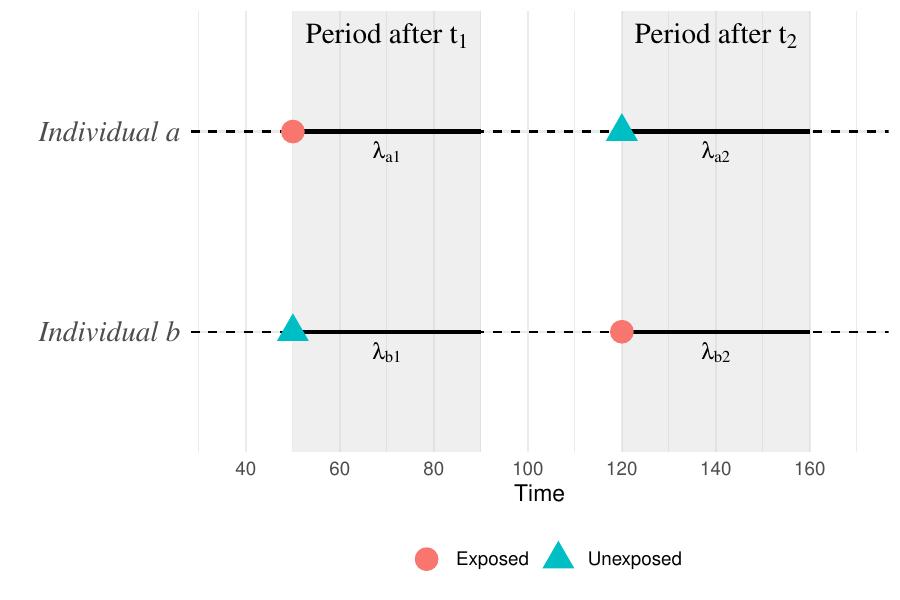}
	\caption{A graphical depiction of an exemplary pair $(a, b)$ in the proposed symmetric pair matching design. Here, individual $a$ was exposed at $t_1 = 50$ and individual $b$ was exposed at $t_2 = 120$ with a risk period duration of $\tau = 40$. $\lambda_{a1}$, $\lambda_{a2}$, $\lambda_{b1}$ and $\lambda_{b2}$ denote the corresponding individual and time period specific event rates.}
	\label{fig::pair_example}
\end{figure}

\par\medskip
The key idea is that the symmetric usage of observation time under control and exposure for both individuals allows perfect cancellation of all time-invariant individual level effects and all time related effects, under some assumptions, similar to the CTC \parencite{Suissa1995} and CCTC \parencite{Wang2011} designs. Suppose that the occurrence of events can be described as a non-homogeneous Poisson process \parencite{CifuentesAmado2015}, so that the number of events experienced by individual $i$ in the $\tau$ duration after $t$ follows a Poisson distribution, where the time period and person-specific event rate is defined as:

\begin{equation} \label{eq::poisson_dist1}
	\ln(\lambda_{it\tau}) = \delta_{i} + \gamma_{t\tau} + A_i(t)\theta.
\end{equation}

Here, $\delta_i$ denotes time-invariant individual-level effects, $\gamma_{t\tau}$ is the time effect of the period and $\theta$ is the fixed exposure effect. Consider a valid matched pair consisting of individuals $a$ and $b$. Let $t_1$ and $t_2$ denote the exposure time of these individuals, with $t_1 < t_2$. The corresponding Poisson event rates are:

\begin{equation} \label{eq::pair_rates}
	\begin{aligned}
		\ln(\lambda_{a,t_1,\tau}) = & \delta_{a} + \gamma_{t_1\tau}, \\
		\ln(\lambda_{b,t_1,\tau}) = & \delta_{b} + \gamma_{t_1\tau} + \theta, \\
		\ln(\lambda_{b,t_2,\tau}) = & \delta_{b} + \gamma_{t_2\tau}, \\
		\ln(\lambda_{a,t_2,\tau}) = & \delta_{a} + \gamma_{t_2\tau} + \theta. \\
	\end{aligned}
\end{equation}

For notational convenience, we will use $\lambda_{a1}$, $\lambda_{b1}$, $\lambda_{b2}$ and $\lambda_{a2}$ instead, respectively. Using some simple algebra these equations can be re-arranged (see appendix~\ref{appendix::theta}) to obtain:

\begin{equation} \label{eq::theta_identity}
	\theta = \frac{1}{2} \ln\! \left(\frac{\lambda_{b1} \lambda_{a2}}{\lambda_{a1} \lambda_{b2}}\right).
\end{equation}

This result is, however, insufficient to obtain an estimate of $\theta$, since we do not actually observe the event rates directly. Instead, we only observe random draws of events that follow a Poisson distribution with the respective event rates. Formally, we assume that:

\begin{equation}
	X_{itj} \sim \text{Poisson}(\lambda_{it\tau}) \quad \forall i,t.
\end{equation}

For each pair, we thus observe four event counts, corresponding to one random draw from the respective time-dependent and individual event rates. Suppose that these counts are contained in four random variables denoted $X_{a1j}$, $X_{a2j}$, $X_{b1j}$ and $X_{b2j}$, where $j \in \{1, ..., m\}$ indexes the $m$ pairs used. Because the event counts of different individuals are assumed to be independent, it follows that:

\begin{equation}
	E(X_{b1j} X_{a2j}) = \lambda_{b1}\lambda_{a2} \quad \text{and} \quad E(X_{a1j} X_{b2j}) = \lambda_{a1}\lambda_{b2},
\end{equation}

for each pair. Using the structural identity from equation~\ref{eq::theta_identity}, we can thus write:

\begin{equation}
	\theta = \frac{1}{2} \ln\! \left(\frac{E(X_{b1j} X_{a2j})}{E(X_{a1j} X_{b2j})}\right),
\end{equation}

which motivates the empirical moments based estimator:

\begin{equation}
	\hat{\theta}_m = \frac{1}{2} \ln\! \left(\frac{\frac{1}{m}\sum_{j = 1}^{m} X_{b1j} X_{a2j}}{\frac{1}{m}\sum_{j = 1}^{m} X_{a1j} X_{b2j}}\right).
\end{equation}

In words, we take the average over the individual-level event products as an estimate of the average over the expectations of individual-level products. Due to its use of individual-level multiplication, only individuals that have experienced at least one event and were exposed at some point in time contribute to the final estimate, similar to the original SCCS design \parencite{Farrington1995}. Appendix~\ref{appendix::consistency} contains a more thorough motivation and mathematical proof that this estimator is asymptotically unbiased. The associated variance and confidence interval may be calculated using individual-level bootstrapping \parencite{Efron1986}, as described in more detail in appendix~\ref{appendix::bootstrap}.

\subsection{Forming the Pairs} \label{chap::pairs}

This of course raises the question how the pairs should be created. One might be tempted to use each individual in only a single pair. This method is inefficient, because it uses only $2\tau$ time units per person and thus likely disregards large parts of the observed data. To preserve more of the available information, we may form \emph{all} possible valid pairs instead. Although this approach uses individuals multiple times, the estimator stays asymptotically unbiased, as long as the number of valid pairs grows faster than linearly with $n$. This is guaranteed to be true if the probability of a valid pair is positive (see appendix~\ref{appendix::consistency}). For compuational efficiency, users may use a large number of randomly choosen pairs instead, without much loss of information. All of these options are implemented in the \texttt{SPMD} R package.

\subsection{Required Assumptions}

Multiple assumptions have to be met for the application of this estimator. First, the four causal identifiability assumptions stated in section~\ref{chap::target_estimand} have to be met. Since the presented method is a self-controlled method, the exchangeability assumption holds by design with respect to time-constant confounders, but may be violated by the presence of time-dependent confounders. Importantly, the effect of $A_i(t)$ on the outcome has to be transient, meaning that it only has an effect in the $\tau$ time units after exposure, where $\tau$ is the same for all individuals and known. This exposure effect also has to be the same for all individuals. The outcome must be recurrent, so that events do not preclude further events. Additionally, the exposure must be independent of previous events. These assumptions are also required for the SCCS design \parencite{Farrington2018, Etievant2026}. Unlike in the standard SCCS, both the exposure and event baseline probabilities may be arbitrary functions of time. However, the effect of time-constant confounders and the exposure must not vary over time, and the outcome time trend must be the same for all indiviuals. The latter two assumptions are similar to the ones required for the CTC and CCTC designs \parencite{Suissa1995, Suissa1998, Wang2011}.
\par\medskip
Right-censoring may occur, as long as it is not caused by time-dependent factors that also cause the exposure or outcome. This includes the outcome itself, e.g. outcome dependent censoring may not be present. The only parametric assumption made by the method is the Poisson assumption described in equation~\ref{eq::poisson_dist1}. No further modeling assumptions are required. As in other self-controlled designs, not all of these assumptions are empirically testable. Researchers have to carefully assess all assumptions using subject-matter knowledge on a case specific basis, before drawing causal conclusions.

\section{Simulation Study} \label{chap::simulation}

We performed a small Monte-Carlo simulation study to investigate the finite sample performance of the proposed method in comparison to existing ones. This simulation study is designed similarly to the pre-registered study by \textcite{Meiszl2026a}. The employed data generation process (DGP) contains three variables: the time-dependent binary exposure status $A_i(t)$, the recurrent outcome $Y_i(t)$, which is 1 whenever an event occurs and 0 otherwise, and a standard normally distributed time-fixed baseline confounder $U_i$. The data is generated for the time period $[0, 1000]$. The first and only exposure time per individual $T_{A_i}$ is generated from a Cox proportional hazards model \parencite{Cox1972} of the form:

\begin{equation}
	h_A(t | U_i) = h_{A0}(t) \exp(U_i\ln(2)).
\end{equation}

Based on the generated exposure time, we then define an exposure history variable $E_i(t)$ as:

\begin{equation}
	E_i(t) = I(T_{A_i} < t \leq T_{A_i} + \tau).
\end{equation}

The recurrent outcome events are generated from the following Cox proportional hazards model:

\begin{equation}
	h_Y(t | U_i, E_i(t)) = h_{Y0}(t) \exp(U_i\ln(2) + E_i(t) \ln(2.5)).
\end{equation}

Here, $h_{A0}(t)$ denotes the exposure baseline hazard and $h_{Y0}(t)$ the outcome baseline hazard. In scenario 1, both of these are constant over time, whereas in scenario 2 the exposure and outcome time probability oscillate out of phase from each other, as shown in figure~\ref{fig::time_trends}. This latter type of time trend pattern is often observed in influenza vaccine safety studies, where vaccination campaigns occur before the influenza season each year. Because both the probability for the exposure and the outcome are time-dependent, confounding based on time is induced. The only other confounder in both scenarios is $U_i$, because it causes both $A_i(t)$ and $Y_i(t)$. Note that because previous events in $Y_i(t)$ do not influence future events, $Y_i(t)$ is mathematically equivalent to a non-homogeneous Poisson process \parencite{CifuentesAmado2015}. No censoring beyond administrative censoring at study end was included.

\begin{figure}[!htb]
	\centering
	\includegraphics[width=1\linewidth]{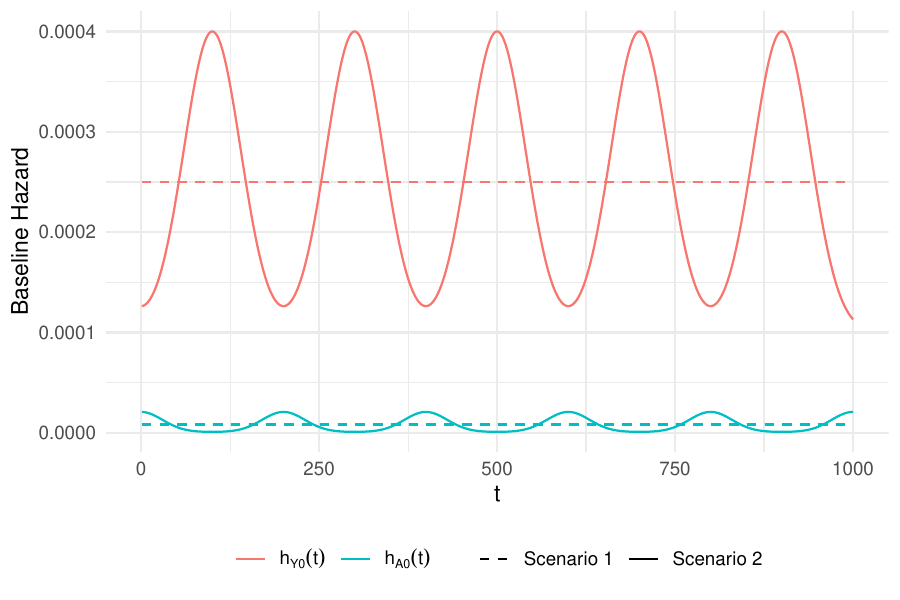}
	\caption{A graphical depiction of the baseline exposure hazard ($h_{A0}(t)$) and baseline outcome hazard ($h_{Y0}(t)$) as a function of time in both simulation scenarios.}
	\label{fig::time_trends}
\end{figure}

\par\medskip
For both scenarios, we generated $1000$ datasets with $n = 20000$ individuals each using a discrete-event simulation approach \parencite{Gillespie1977, Denz2026}. Afterwards, we applied the standard SCCS \parencite{Farrington1995}, the natural spline based SCCS \parencite{GhebremichaelWeldeselassie2017} with 5 and 15 knots placed at suitably choosen quantiles, the standard CCO with a control period directly preceeding the risk period \parencite{Maclure1991}, the CTC method \parencite{Suissa1995} and the proposed estimator using all valid pairs to each dataset. Figure~\ref{fig::sim_results} shows boxplots of the difference between $\theta$ and $\hat{\theta}$ for all included methods and scenarios.

\begin{figure}[!htb]
	\centering
	\includegraphics[width=1\linewidth]{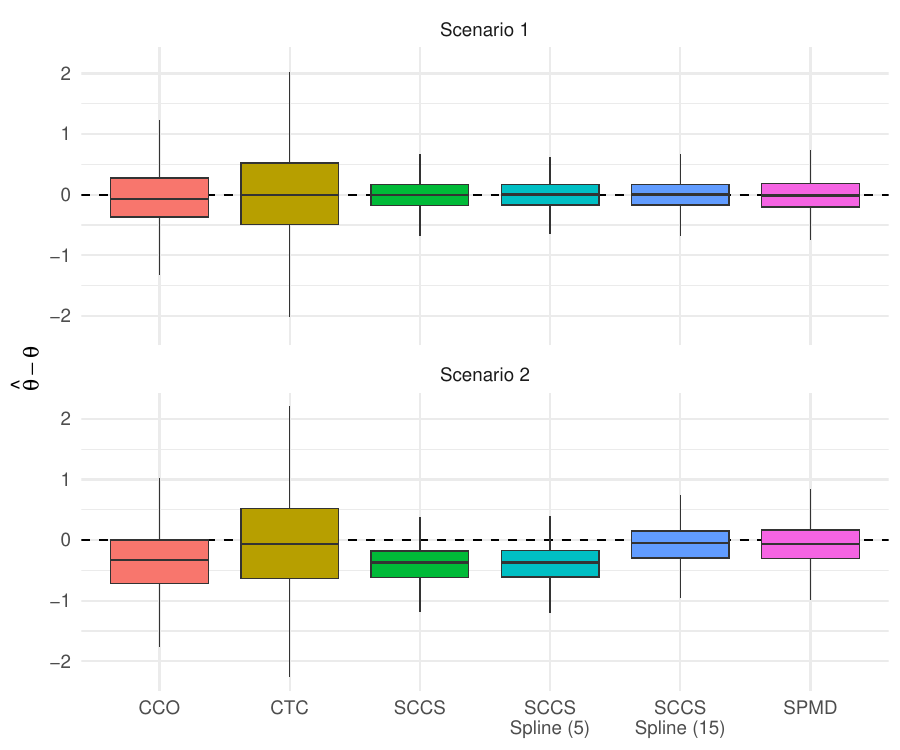}
	\caption{Boxplots of the deviations between the estimated and true $\theta$ for all considered methods and simulation scenarios. A sample size of $n = 20000$ was used with $1000$ simulation replications. Outliers are not shown.}
	\label{fig::sim_results}
\end{figure}

\par\medskip
In scenario 1 without any time trends, all methods produced unbiased estimates on average. The variance of the estimates is comparable between all three SCCS variants and the proposed method, whereas the CCO and CTC estimates have a much larger spread around the true value, because both method uses only a small fraction of the available information. In scenario 2, the standard SCCS and CCO methods showed substantial bias, due to the time-induced confounding. Similarly, the spline based SCCS with only 5 knots was not able to correctly model the time trends and thus also produced biased results. With 15 knots on the other hand, the spline based SCCS method was unbiased. The CTC method also managed to adjust for the time effects, albeit with a larger variance than the spline based SCCS. Finally, the proposed method also produced unbiased estimates, with an efficiency comparable to the correctly specified spline based SCCS. The average bias and mean squared error and associated Monte-Carlo standard errors are given in appendix~\ref{appendix::simulation}. This appendix also shows further simulation results, in which the sample size and $\theta$ were varied.

\newpage

\section{Discussion}

We introduced SPM, a self-controlled study design that combines automatic control of time-invariant confounding with design-based adjustment for temporal effects. Like other self-controlled designs such as the SCCS \parencite{Farrington1995}, CCO \parencite{Maclure1991} or CTC \parencite{Suissa1995} designs, SPM makes within-individual comparisons and therefore controls for confounding by characteristics that remain constant during the observation period. In addition, its underlying symmetric pairs account for temporal effects without modeling them explicitly. These pairs are built only from exposed individuals, requiring no external control group that may induce selection bias \parencite{Suissa1995, Wang2011}. As such, this design may be particularly useful in pharmacoepidemiology and vaccine safety studies, where both exposure and outcome incidence may vary over time \parencite{Hallas2014, Nie2022, Bots2025a}. It would, conversely, not be appropriate for the analysis of long-term or sustained treatment effects.
\par\medskip
An important feature of SPM is that, contrary to self-controlled methods anchored on the outcome, it uses information contained in the observation period after event occurrences. When appropriate, this greatly increases efficiency as compared to outcome anchored designs \parencite{Takeuchi2018}. In our simulations, SPM therefore showed greater efficiency than the alternative design-based approaches considered \parencite{Maclure1991, Suissa1995}, while remaining unbiased in the presence of temporal trends in both exposure and outcome incidence. The estimator is also deliberately simple: it is based on sums of events in the matched time periods and thus does not require fitting a complex statistical model. This makes the method straightforward to implement and interpret, and avoids the need to specify a functional form for temporal trends or seasonality. Unlike other recently proposed self-controlled methods \parencite{Allison2006, Gasparrini2021}, it can also be directly applied in continuous time, without categorization of the underlying time scale.
\par\medskip
SPM nevertheless retains the fairly restrictive assumptions inherent to self-controlled study designs \parencite{Farrington1995, Farrington2018, Etievant2026}. It should therefore not be viewed as a general solution to causal inference problems, but as an alternative to the SCCS and related methods when their underlying assumptions are plausible and temporal effects are an important concern. If some of the required assumptions, such that events must be independent of past events, are violated, existing extensions for the SCCS relaxing these assumptions are likely preferable \parencite{Farrington2011}. Future work should investigate similar extensions to the proposed SPM for settings involving, for example, non-independent events, terminal events, or violations of the parallel trends assumption. Currently, if the occurrence of an event changes subsequent exposure patterns of an individual, case-crossover based methods are better suited for causal effect estimation.
\par\medskip
A further limitation is that SPM, like other self-controlled designs, currently does not support adjustment for time-dependent confounders. In settings involving complex treatment processes and time-dependent confounding, methods such as marginal structural models \parencite{Robins2000} or time-dependent matching \parencite{Thomas2020} may therefore be more appropriate. SPM can also become computationally demanding when the number of eligible pairings is large, although the accompanying R package alleviates those issues by offering an efficient implementation relying on state-of-the-art software backend for all required steps \parencite{Barrett2026}.
\par\medskip
In conclusion, SPM provides a simple design-based approach to control for temporal effects in self-controlled studies. By combining the efficiency of the SCCS, gained through inclusion of observation time regardless of event occurences, with adjustment for temporal effects that does not require explicit modeling, SPM offers a practical alternative for studies of transient exposures, including vaccine safety and drug safety studies. It is not intended to replace methods designed for more complex causal structures, but provides an efficient and transparent option when the assumptions of a self-controlled design are appropriate.

\section*{Acknowledgments}

We would like to thank all members of the Department of Medical Informatics, Biometry and Epidemiology at the Ruhr-University Bochum for their valuable input and the multiple discussions about the contents of this paper.

\section*{Financial disclosure}

None reported.

\section*{Conflict of interest}

The authors declare no potential conflict of interests.

\FloatBarrier
\newpage

\printbibliography

\appendix

\newpage

\counterwithin{equation}{section}
\counterwithin{figure}{section}
\counterwithin{table}{section}

\section{Derivation of the Equation for $\theta$} \label{appendix::theta}

To obtain equation~\ref{eq::theta_identity} given the four pair-level rates as defined in equation~\ref{eq::pair_rates}, we first substract within each time point, which yields:

\begin{equation}
	\ln(\lambda_{b1}) - \ln(\lambda_{a1}) = (\delta_{b} - \delta_{a}) + \theta,
\end{equation}

and:

\begin{equation}
	\ln(\lambda_{a2}) - \ln(\lambda_{b2}) = (\delta_{a} - \delta_{b}) + \theta.
\end{equation}

Adding these two equations together we obtain:

\begin{equation}
	(\ln(\lambda_{b1}) - \ln(\lambda_{a1})) + (\ln(\lambda_{a2}) - \ln(\lambda_{b2})) = 2\theta,
\end{equation}

which can be simplified to:

\begin{equation}
	\theta = \frac{1}{2} \Big[\ln(\lambda_{b1}) - \ln(\lambda_{a1}) + \ln(\lambda_{a2}) - \ln(\lambda_{b2}) \Big].
\end{equation}

This equation may be re-written as:

\begin{equation}
	\theta = \frac{1}{2} \ln\!\left(\frac{\lambda_{b1} \lambda_{a2}}{\lambda_{a1} \lambda_{b2}}\right).
\end{equation}

This cancellation of individual-level and time effects does theoretically not require the assumption that each rate is the parameter of a Poisson distributed. A similar derivation could be made for any model in which the individual-level effects and time effects have a multiplicative effect on the expected value of the event count distribution. We choose to focus on the Poisson assumption here, because it made the derivation of the estimator and the associated proof of consistency in appendix~\ref{appendix::consistency} more convenient. See appendix~\ref{appendix::consistency_poisson} for a small discussion.

\newpage

\section{Consistency of the Proposed Estimator} \label{appendix::consistency}

\subsection{Motivation and Notation}

We first establish the form of the empirical moments based estimator. For notational convenience, we adopt a slightly different notation as compared to the main text. Suppose we observe data on $n$ independent individuals, indexed by $1, ..., n$. For individual $i$, let $N_i(t)$ denote the cumulative number of events observed up to time $t$. Thus:

\begin{equation}
	N_i(t_2) - N_i(t_1)
\end{equation}

is the number of events occurring during the time period $[t_1, t_2]$. Let $T_i$ denote the exposure time of individual $i$ and define:

\begin{equation}
	N_{ij} = N_i(T_j + \tau) - N_i(T_j),
\end{equation}

which is the number of events experienced by individual $i$ during the risk period of individual $j$. We condition throughout on the exposure times $T_1, ..., T_n$ and assume that whenever $i = j$ or when the two exposure windows do not overlap, e.g. when:

\begin{equation}
	[T_i, T_i + \tau] \cap [T_j, T_j + \tau] = \emptyset,
\end{equation}

the random variable $N_{ij}$ follows a Poisson distribution, with:

\begin{equation}
	N_{ij} \sim \text{Poisson}(\lambda_{ij\tau}),
\end{equation}

where:

\begin{equation}
	\lambda_{ij\tau} = \exp(\delta_i + \gamma_{T_j\tau} + A_i(T_j)\theta).
\end{equation}

Here, $A_i(T_j)$ indicates whether individual $i$ was exposed at $T_j$, $\delta_i$ is the time constant individual-level effect and $\gamma_{T_j\tau}$ is the time effect during the interval $[T_j, T_j + \tau]$. Using this notation, the event rates inside a pair $(i, j)$ are defined as:

\begin{equation} \label{eq::pair_rates2}
	\begin{aligned}
		\lambda_{ij\tau} = & \exp(\delta_{i} + \gamma_{T_j\tau}), \\
		\lambda_{jj\tau} = & \exp(\delta_{j} + \gamma_{T_j\tau} + \theta), \\
		\lambda_{ji\tau} = & \exp(\delta_{j} + \gamma_{T_i\tau}), \\
		\lambda_{ii\tau} = & \exp(\delta_{i} + \gamma_{T_i\tau} + \theta). \\
	\end{aligned}
\end{equation}

From the derivation in appendix~\ref{appendix::theta} we know that:

\begin{equation} \label{eq::theta2}
	\lambda_{jj\tau}\lambda_{ii\tau} = e^{2\theta} \lambda_{ij\tau}\lambda_{ji\tau},
\end{equation}

for all valid pairs. A pair $(i, j)$ is defined to be valid if and only if $i \neq j$ and $|T_i - T_j| > \tau$. We thus define the set of all valid ordered pairs as:

\begin{equation}
	E_n = \{(i, j) : [T_i, T_i + \tau] \cap [T_j, T_j + \tau] = \emptyset\}.
\end{equation}

We use ordered pairs throughout the proof purely for notational simplicity. Since the average over unordered pairs is equal to the average over ordered pairs, this has no impact on the arguments. For additional notational simplicity, we define:

\begin{equation}
	X_{ij} = N_{ij} N_{ji}, \quad \text{and} \quad Y_{ij} = N_{ii} N_{jj}.
\end{equation}

Their empirical averages are defined as:

\begin{equation}
	\bar{X}_n = \frac{1}{|E_n|} \sum_{(i, j) \in E_n} X_{ij},
\end{equation}

and:

\begin{equation}
	\bar{Y}_n = \frac{1}{|E_n|} \sum_{(i, j) \in E_n} Y_{ij}.
\end{equation}

Since independent Poisson variables satisfy:

\begin{equation}
	E(N_{ij} N_{ji}) = \lambda_{ij} \lambda_{ji},
\end{equation}

we obtain:

\begin{equation}
	E(\bar{Y}_n) = e^{2\theta} E(\bar{X}_n).
\end{equation}

This motivates the estimator:

\begin{equation}
	\hat{\theta}_n = \frac{1}{2} \ln\left(\frac{\bar{Y}_n}{\bar{X}_n}\right).
\end{equation}

The following theorem establishes consistency of this estimator even when the valid pairs are dependent, such that an individual may appear in more than one pair.

\subsection{Consistency}

\begin{theorem}[Consistency of the SPM estimator]
	\label{thm::consistency}
	Condition on the exposure times $T_1,\ldots,T_n$. Suppose that, for every relevant $i,j,n$, the event counts are Poisson distributed with means $\lambda_{ij\tau}$ satisfying
	\begin{equation}
		0 < c \leq \lambda_{ij\tau} \leq C < \infty
	\end{equation}
	for fixed constants $c$ and $C$. Suppose further that, for every valid pair $(i,j)\in E_n$, the counts $N_{ij}$ and $N_{ji}$ are independent. Let
	\begin{equation}
		A_n
		=
		{((i,j),(k,l))\in E_n^2:
			|{i,j,k,l}|=3}.
	\end{equation}
	If the very mild conditions (see~\ref{appendix::consistency_conditions})
	\begin{equation}
		\frac{1}{|E_n|}\longrightarrow 0
		\qquad\text{and}\qquad
		\frac{|A_n|}{|E_n|^2}\longrightarrow 0,
	\end{equation}
	are satisfied, then
	\begin{equation}
		\hat{\theta}_n-\theta
		\overset{p}{\longrightarrow}0.
	\end{equation}
\end{theorem}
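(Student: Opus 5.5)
The plan is a second-moment (Chebyshev) argument applied separately to $\bar{X}_n$ and $\bar{Y}_n$, followed by the continuous mapping theorem. Throughout we condition on $T_1,\ldots,T_n$, so the $\lambda_{ij\tau}$ are deterministic. We also use that the counting processes of different individuals are independent, which is the paper's standing assumption.

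\textbf{Step 1: means.} By the bounds $c \le \lambda_{ij\tau} \le C$ and the assumed independence of $N_{ij}$ and $N_{ji}$ within a valid pair,
\begin{equation}
	\mu^X_n := E(\bar{X}_n) = \frac{1}{|E_n|}\sum_{(i,j)\in E_n}\lambda_{ij\tau}\lambda_{ji\tau} \in [c^2, C^2].
\end{equation}
For $Y_{ij}=N_{ii}N_{jj}$, independence across individuals gives $E(Y_{ij})=\lambda_{ii\tau}\lambda_{jj\tau}$. By (\ref{eq::theta2}) this equals $e^{2\theta}\lambda_{ij\tau}\lambda_{ji\tau}$, so $\mu^Y_n := E(\bar{Y}_n) = e^{2\theta}\mu^X_n$ exactly, for every $n$.

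\textbf{Step 2: variances.} Write
\begin{equation}
	\operatorname{Var}(\bar{X}_n)=|E_n|^{-2}\sum_{(i,j),(k,l)\in E_n}\operatorname{Cov}(X_{ij},X_{kl}).
\end{equation}
We split the double sum according to $|\{i,j,k,l\}|$.

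If $|\{i,j,k,l\}|=4$, the two products involve disjoint sets of individuals. Each $N_{ij}$ is a function of individual $i$'s process only, so the covariance is $0$.

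If $|\{i,j,k,l\}|=2$, the pairs coincide up to order, giving at most $2|E_n|$ terms.

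If $|\{i,j,k,l\}|=3$, the index pairs lie in $A_n$.

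Every covariance is bounded by a constant $K=K(C)$. By Cauchy--Schwarz, $|\operatorname{Cov}(X_{ij},X_{kl})|\le E(X_{ij}^2)^{1/2}E(X_{kl}^2)^{1/2}$. Moreover $E(N^2)=\lambda+\lambda^2\le C+C^2$ for each Poisson count. For $Y_{ij}$, the two factors come from different individuals and are independent; for $X_{ij}$, they are independent by hypothesis. Either way the product has bounded second moment. Hence
\begin{equation}
	\operatorname{Var}(\bar{X}_n)\le K\left(\frac{2}{|E_n|}+\frac{|A_n|}{|E_n|^2}\right)\to 0,
\end{equation}
and the same bound holds for $\bar{Y}_n$.

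The main obstacle is making the zero-covariance claim for disjoint index sets fully rigorous. It requires that $N_{ij}$ depends only on individual $i$'s process, which holds because $T_j$ is conditioned on, together with independence across individuals. I would state this explicitly as the mechanism, since the theorem's hypotheses only mention within-pair independence.

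\textbf{Step 3: conclusion.} Chebyshev gives $\bar{X}_n-\mu^X_n\overset{p}{\to}0$ and $\bar{Y}_n-\mu^Y_n\overset{p}{\to}0$. Since $\mu^X_n\ge c^2>0$ and $\mu^X_n$ is bounded, it follows that
\begin{equation}
	\frac{\bar{Y}_n}{\bar{X}_n}-e^{2\theta}=\frac{(\bar{Y}_n-\mu^Y_n)-e^{2\theta}(\bar{X}_n-\mu^X_n)}{\bar{X}_n}\overset{p}{\to}0.
\end{equation}
Here we use $\bar{X}_n\ge c^2/2$ with probability tending to one. Because $\mu^X_n$ may not converge, we work with this differenced form rather than appealing directly to limits of the means. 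Finally, $\ln$ is uniformly continuous on a neighbourhood of $e^{2\theta}$ bounded away from $0$, so $\hat{\theta}_n-\theta\overset{p}{\to}0$. The event $\bar{X}_n=0$, on which $\hat{\theta}_n$ is undefined, has vanishing probability and does not affect the argument.
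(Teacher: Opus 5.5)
Your proof is correct and follows the paper's argument: you use the same partition of $\operatorname{Var}(\bar{X}_n)$ by $|\{i,j,k,l\}|$ with Cauchy--Schwarz on the three-index terms, then Chebyshev, then a continuity step for the logarithm. The paper's final step applies a Lipschitz bound for $\ln$ on $[\ell/2,\infty)$ to $\bar{X}_n$ and $\bar{Y}_n$ separately and uses $\ln(E\bar{Y}_n/E\bar{X}_n)=2\theta$, whereas you use $\mu^Y_n=e^{2\theta}\mu^X_n$ to show the ratio itself converges to $e^{2\theta}$; the two are equivalent, and stating cross-individual independence explicitly is a worthwhile clarification, since the paper uses it as an unstated standing assumption beyond the theorem's within-pair independence hypothesis.
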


\begin{proof}
	
	We first establish concentration of $\bar{X}_n$ around its expectation. Since products of Poisson variables have moments of all orders that are bounded uniformly when their means are uniformly bounded, there exists a finite constant $M_X$ such that:
	
	\begin{equation}
		\underset{i,j}{\sup}\,\text{Var}(X_{ij}) \leq M_X < \infty,
	\end{equation}
	
	and analogously:
	
	\begin{equation}
		\underset{i,j}{\sup}\,\text{Var}(Y_{ij}) \leq M_Y < \infty.
	\end{equation}
	
	Moreover,
	
	\begin{equation} \label{eq::X_ij_bound}
		E(X_{ij}) = \lambda_{ij\tau} \lambda_{ji\tau} \geq c^2
	\end{equation}
	
	and
	
	\begin{equation}
		E(Y_{ij}) = \lambda_{ii\tau} \lambda_{jj\tau} \geq c^2.
	\end{equation}
	
	Therefore:
	
	\begin{equation}
		E(\bar{X}_n) \geq c^2 \quad \text{and} \quad E(\bar{Y}_n) \geq c^2.
	\end{equation}
	
	Consider:
	
	\begin{equation}
		\text{Var}(\bar{X}_n) = \frac{1}{|E_n|^2} \sum_{(i, j), (k, l) \in E_n} \text{Cov}(X_{ij}, X_{kl}).
	\end{equation}
	
	We partition the terms according to the number of distinct individuals among $i, j, k, l$. First, consider terms for which:
	
	\begin{equation}
		|\{i, j, k , l\}| = 2.
	\end{equation}
	
	Because $i \neq j$ and $k \neq l$ these are precisely cases in which:
	
	\begin{equation}
		(k, l) = (i, j) \quad \text{or} \quad (k, l) = (j, i).
	\end{equation}
	
	Note that:
	
	\begin{equation}
		X_{ij} = N_{ij} N_{ji} = X_{ji}.
	\end{equation}
	
	Thus the terms are not independent, but there are only $2 |E_n|$ such ordered combinations if the two representations are counted separately. Hence their total contribution is bounded in absolute value by:
	
	\begin{equation}
		\frac{2 |E_n| M_X}{|E_n|^2} = \frac{2M_X}{|E_n|}.
	\end{equation}
	
	Consequently, this contribution converges to zero whenever:
	
	\begin{equation}
		\frac{1}{|E_n|} \rightarrow 0.
	\end{equation}
	
	Next consider terms for which:
	
	\begin{equation}
		|\{i, j, k, l\}| = 4.
	\end{equation}
	
	In this case, the two pairs contain disjoint sets of individuals. Since the individuals are independent, all random variables entering $X_{ij}$ are independent of those entering $X_{kl}$. Therefore:
	
	\begin{equation}
		\text{Cov}(X_{ij}, X_{kl}) = 0.
	\end{equation}
	
	Finally, consider terms for which:
	
	\begin{equation}
		|\{i, j, k, l\}| = 3.
	\end{equation}
	
	These correspond to two valid pairs sharing exactly one individual. Using the Cauchy-Schwarz inequality, it follows that:
	
	\begin{equation}
		|\text{Cov}(X_{ij}, X_{kl})| \leq \sqrt{\text{Var}(X_{ij}) \text{Var}(X_{kl})} \leq M_X.
	\end{equation}
	
	Consequently:
	
	\begin{equation}
		\left| \frac{1}{|E_n|^2} \sum_{((i, j), (k, l)) \in A_n} \text{Cov}(X_{ij}, X_{kl}) \right| \leq M_X \frac{|A_n|}{|E_n|^2}.
	\end{equation}
	
	We therefore obtain the bound:
	
	\begin{equation}
		\text{Var}(\bar{X}_n) \leq \frac{2M_X}{|E_n|} + M_X \frac{|A_n|}{|E_n|^2}.
	\end{equation}
	
	Thus:
	
	\begin{equation}
		\text{Var}(\bar{X}_n) \longrightarrow 0.
	\end{equation}
	
	provided that:
	
	\begin{equation}
		\frac{1}{|E_n|} \longrightarrow 0,
	\end{equation}
	
	and:
	
	\begin{equation}
		\frac{|A_n|}{|E_n|^2} \longrightarrow 0.
	\end{equation}
	
	The same argument holds for $\bar{Y}_n$. Next, we use this result to show that $\ln(\bar{X}_n)-\ln(E(\bar{X}_n))\to 0$ in probability: Fix $\epsilon >0$ and set
	
	\begin{equation}
		D_n = \{\epsilon <|\ln(\bar{X}_n) - E(\ln(\bar{X}_n))|\}.
	\end{equation}
	
	By equation~\ref{eq::X_ij_bound},
	
	\begin{equation}
		0<\ell\coloneqq \inf_n E(\bar{X}_n)
	\end{equation}
	
	and we set
	
	\begin{equation}
		B_n = \{|\bar{X}_n - E(\bar{X}_n)| < \ell/2\}.
	\end{equation}
	
	Then $P(B_n^c) \to 0$, and on $B_n$,
	
	\begin{equation}
		\bar{X}_n > E(\bar{X}_n) - \ell/2 \ge \ell - \frac{\ell}{2} = \frac{\ell}{2}.
	\end{equation}
	
	Also, $E(\bar{X}_n) \ge \ell > \ell/2$ trivially. So on $B_n$, both $\bar{X}_n$ and $E(\bar{X}_n)$ lie in $[\ell/2, \infty)$. Furthermore, $\ln|_{[\ell/2,\infty)}$ is Lipschitz continuous with Lipschitz constant $2/\ell$: Indeed,
	
	\begin{equation}
		\forall x\geq \ell/2:\ln'(x) = \frac{1}{x} \le \frac{2}{\ell} 
	\end{equation}
	
	and thus, by the Newton–Leibniz theorem,
	
	\begin{equation}
		\forall x_1,x_2\geq \ell/2:|\ln(x_1) - \ln(x_2)|=\int_{\min(x_1,x_2)}^{\max(x_1,x_2)}\frac{1}{x}dx \le \frac{2}{\ell}|x_1-x_2|.
	\end{equation}
	
	Consequently,
	
	\begin{equation}
		\text{on } B_n: \qquad |\ln(\bar{X}_n) - \ln(E(\bar{X}_n))| \le (2/\ell)|\bar{X}_n - E(\bar{X}_n)|
	\end{equation}
	
	and thus
	
	\begin{equation}
		\begin{aligned}
			P(\epsilon <|\ln(\bar{X}_n) - \ln(E(\bar{X}_n))| ) & =P(D_n)=P(D_n\cap B_n)+ P(D_n\cap B_n^c)\\
			&\leq P(D_n\cap B_n)+P(B_n^c)\\
			&\leq P(|\bar{X}_n - E(\bar{X}_n)| > \epsilon\ell/2)+P(B_n^c) \\
		\end{aligned}
	\end{equation}
	
	and both terms $\to 0$. Putting everything together yields
	
	\begin{equation}
		\begin{split}
			2|\hat\theta_n-\theta|=\left|\ln(\bar{Y}_n/\bar{X}_n)-\ln(E(\bar{Y}_n)/E(\bar{X}_n))\right|\leq \quad \quad \quad \quad \quad \quad \\  |\ln(\bar{Y}_n)-\ln(E(\bar{Y}_n))|+ |\ln(\bar{X}_n)-\ln(E(\bar{X}_n))|\to 0.
		\end{split}
	\end{equation}
	
	Therefore,
	
	\begin{equation}
		\hat{\theta}_n = \frac{1}{2} \ln\left(\frac{\bar{Y}_n}{\bar{X}_n}\right) \overset{p}{\longrightarrow} \frac{1}{2} \ln\left(e^{2\theta}\right) = \theta.
	\end{equation}
	
	Hence,
	
	\begin{equation}
		\hat{\theta}_n - \theta \overset{p}{\longrightarrow} 0,
	\end{equation}
	
	establishing consistency of the estimator under the condition that the event rates are uniformly bounded and if the conditions:
	
	\begin{equation}
		\frac{1}{|E_n|} \longrightarrow 0 \quad \text{and} \quad \frac{|A_n|}{|E_n|^2} \longrightarrow 0
	\end{equation}
	
	hold.
	
\end{proof}

\subsection{Conditions for Consistency} \label{appendix::consistency_conditions}

The two conditions above have a simple interpretation. The first requires the number of valid pairs to diverge. The second requires the proportion of dependent pair-pair comparisons to become asymptotically negligible.
\par\medskip
Below we use two extreme scenarios to illustrate when consistency holds. First, suppose that instead of using all valid pairs, we only use each individual in a single pair. In this case $|E_n| \rightarrow \infty$, so the first condition holds. Furthermore, $|A_n| = 0$, so the fraction $\frac{|A_n|}{|E_n|^2}$ is 0 by definition. Consistency thus holds. Similarly, suppose that we use all possible valid ordered pairs and all of them are valid. In this case:

\begin{equation}
	|E_n| = n (n - 1).
\end{equation}

For a fixed ordered pair $(i, j)$, another ordered pair $(k, l)$ shares exactly one individual with $(i, j)$ in at most four ways:

\begin{equation}
	k = i, \quad l = i, \quad k = j, \quad \text{or} \quad l = j.
\end{equation}

Once the shared individual and its position have been specified, there are at most $n - 2$ choices for the remaining individual. Hence:

\begin{equation}
	|A_n| \leq 4 (n - 2) |E_n|.
\end{equation}

Therefore:

\begin{equation}
	\frac{|A_n|}{|E_n|^2} \leq \frac{4 (n - 2)}{|E_n|}.
\end{equation}

It follows that consistency holds in this case as well. More interestingly, this result also implies that a sufficient condition for the dependence term to vanish is:

\begin{equation}
	\frac{n}{|E_n|} \longrightarrow 0.
\end{equation}

The condition also implies $\frac{1}{|E_n|} \rightarrow 0$, since $n \geq 1$. Hence, the single condition:

\begin{equation}
	\frac{n}{|E_n|} \longrightarrow 0
\end{equation}

is sufficient for consistency under the above uniform boundedness assumptions. This means that whenever the number of valid pairs grows faster than linearly in $n$, consistency follows automatically. For example, suppose that the exposure times $T_1, ..., T_n$ are independent realizations from a fixed distribution and that:

\begin{equation}
	p = P(|T_1 - T_2| > \tau) > 0.
\end{equation}

Because we consider ordered pairs, each of the (n(n-1)) possible ordered pairs is valid with probability $p$. Thus, the expected number of valid ordered pairs is

\begin{equation}
	E(|E_n|)=pn(n-1).
\end{equation}

Consequently, as $n$ increases, the number of valid pairs grows on the order of $n^2$, and we may write

\begin{equation}
	|E_n| \approx p n(n-1).
\end{equation}

It follows that

\begin{equation}
	\frac{|E_n|}{n} \approx p(n-1) \longrightarrow \infty,
\end{equation}

or, equivalently,

\begin{equation}
	\frac{n}{|E_n|} \longrightarrow 0.
\end{equation}

Therefore,

\begin{equation}
	\frac{|A_n|}{|E_n|^2} \leq \frac{4(n-2)}{|E_n|} \longrightarrow 0.
\end{equation}

This means that as long as there is a positive probability that valid pairs can be formed, the condition always holds, regardless of the type of exposure time distribution. The only requirement is that the distribution is fixed and thus independent of $n$.
\par\medskip
It is important to emphasize that these results hold only asymptotically. In any finite sample, there is no guarantee for unbiased estimation. However, the appropriateness of the dependence condition may be checked approximately for any given dataset, since $\frac{|A_n|}{|E_n|^2}$ can be calculated exactly given a dataset of size $n$ with fixed exposure times. Large values indicate that a substantial proportion of pair comparisons are dependent, in which case the theoretical justification for the estimator becomes weak. The associated R package automatically computes this quantity whenever individuals are re-used across pairs.
\par\medskip
It should also be noted that $\frac{|A_n|}{|E_n|^2}$ is a conservative diagnostic rather than a direct measure of the magnitude of the covariance distribution. The variance bound replaces the sum of covariances by the sum of their absolute values and therefore ignores possible cancellation between positive and negative covariance terms. Consequently, the actual covariance contribution can be substantially smaller than this upper bound.

\subsection{The Poisson Assumption} \label{appendix::consistency_poisson}

Throughout both the main text and the proof of consistency above, we assumed that the event counts are Poisson distributed. This assumption is convenient for specifying the data-generating process and deriving the properties of the estimator, but it is stronger than necessary for consistency.
\par\medskip
Inspection of the proof shows that the Poisson assumption is used only to establish two properties. The first is that the relevant cross-moments factorize according to the mean structure. Specifically, for every valid pair $(i, j)$,

\begin{equation}
	E(N_{ij} N_{ji}) = \lambda_{ij\tau} \lambda_{ji\tau},
\end{equation}

and

\begin{equation}
	E(N_{ii} N_{jj}) = \lambda_{ii\tau} \lambda_{jj\tau}.
\end{equation}

These equalities follow from the assumed mean structure together with independence of the corresponding event counts. In particular, the Poisson distribution itself is not required. The second property is that the variances of the products:

\begin{equation}
	X_{ij} = N_{ij} N_{ji} \quad \text{and} \quad Y_{ij} = N_{ii} N_{jj}
\end{equation}

are uniformly bounded. A sufficient condition for this is that the corresponding event counts have uniformly bounded fourth moments. This condition is satisfied, for example, by Poisson or negative binomial event counts when their parameters are uniformly bounded.
\par\medskip
Consequently, the consistency result extends beyond the Poisson setting. Specifically, the same proof establishes consistency for any data-generating process satisfying the required mean structure, the relevant independence conditions, and uniformly bounded variances of $X_{ij}$ and $Y_{ij}$, together with the conditions on the valid-pair structure established above. The Poisson assumption should therefore be regarded as a convenient sufficient assumption rather than a fundamental requirement for consistency.

\newpage

\section{On the Definition of Valid Pairs} \label{appendix::valid_pairs}

In the main manuscript, we defined a pair $(a, b)$ with respective exposure times $t_1$ and $t_2$ to be valid only if $|t_1 - t_2| > \tau$ and if the durations $[t_1, t_1 + \tau]$ and $[t_2, t_2 + \tau]$ were fully observed for both $a$ and $b$. Both of these requirements may be relaxed somewhat, allowing usage of more observation time, which might result in a gain of efficiency of the estimator. These additional considerations were not introduced in the main text due to issues of space and because they make the notation more complex. First we consider right-censoring.

\begin{figure}[!htb]
	\centering
	\includegraphics[width=1\linewidth]{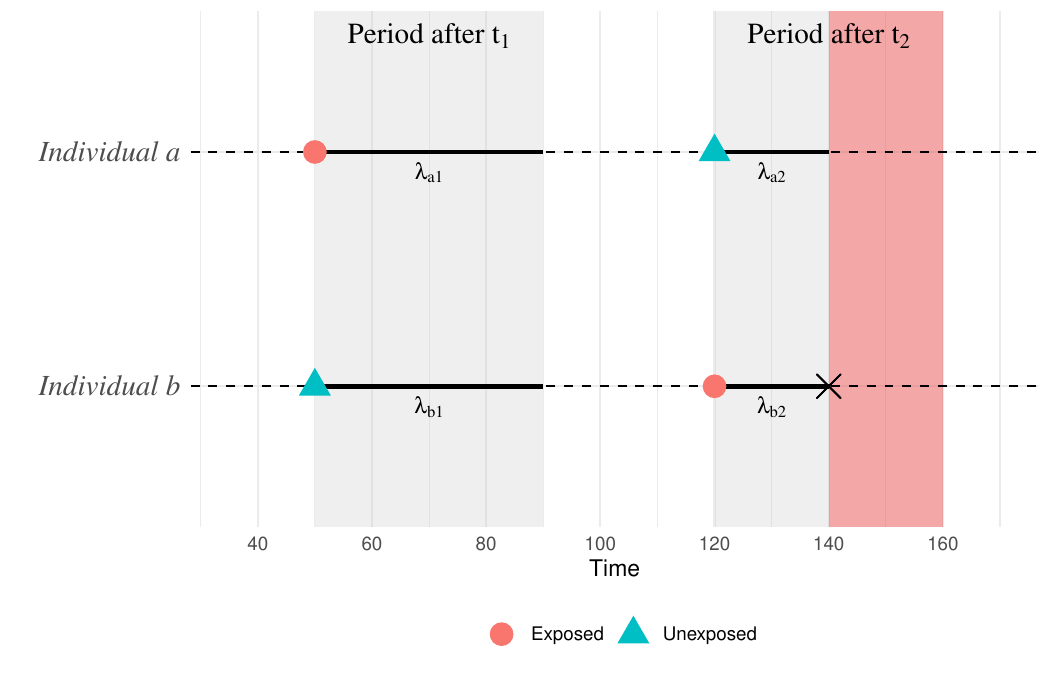}
	\caption{A graphical depiction of an exemplary valid pair $(a, b)$ in the proposed symmetric pair matching design. Here, individual $a$ was exposed at $t_1 = 50$ and individual $b$ was exposed at $t_2 = 120$ with a risk period duration of $\tau = 40$. Additionally, individual $b$ was censored at $t = 140$. $\lambda_{a1}$, $\lambda_{a2}$, $\lambda_{b1}$ and $\lambda_{b2}$ denote the corresponding individual and time period specific event rates.}
	\label{fig::pair_censored_valid}
\end{figure}

\begin{figure}[!htb]
	\centering
	\includegraphics[width=1\linewidth]{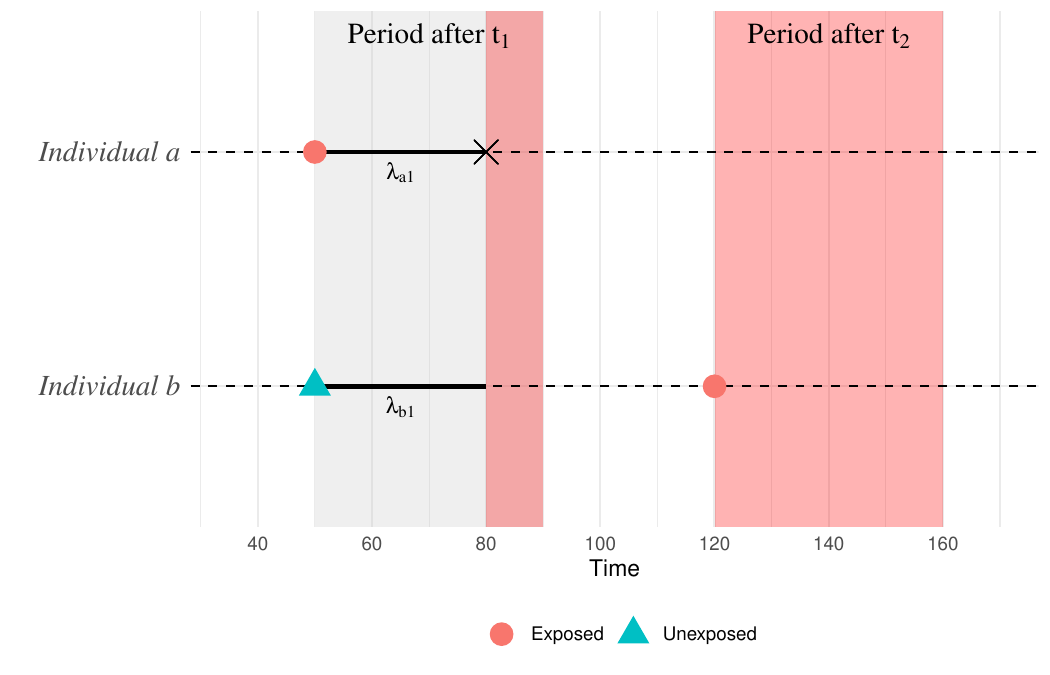}
	\caption{A graphical depiction of an exemplary invalid pair $(a, b)$ in the proposed symmetric pair matching design. Here, individual $a$ was exposed at $t_1 = 50$ and individual $b$ was exposed at $t_2 = 120$ with a risk period duration of $\tau = 40$. Additionally, individual $a$ was censored at $t = 80$, making the pair invalid. $\lambda_{a1}$ and $\lambda_{b1}$ denote the corresponding individual and time period specific event rates.}
	\label{fig::pair_censored_invalid}
\end{figure}

Previously, we only considered a pair to be valid if right-censoring occured after $t_2 + \tau$ for both individuals $a$ and $b$ of a pair. It is, however, possible to use pairs in which right-censoring occurs for one or both pairs after the later of the two exposure times, e.g. $t_2$. Then, we would still observe the entire duration $[t_1, t_1 + \tau]$ for both $a$ and $b$, but only a part of the duration $[t_2, t_2 + \tau]$. An example is depicted in figure~\ref{fig::pair_censored_valid}. In this case, we would truncate the latter period for both individuals at the minimum of their two censoring times. In the example shown in figure~\ref{fig::pair_censored_valid}, only individual $b$ is censored at $t = 140$, so we would count events in the duration $[t_2, 140]$ for both $a$ and $b$ instead of the full duration. This works, because the time effects are the same in this case, so they still cancel out as shown in appendix~\ref{appendix::theta}.
\par\medskip
If the time of right-censoring for any of the two individuals in a pair occurs before $t_2$ instead, the pair remains invalid. In this case, we could not use any observation time for the required second duration. An example of this is shown in figure~\ref{fig::pair_censored_invalid}. Here, individual $a$ is censored at $t = 80$ and $t_2 = 120$. Individual $a$ can thus never serve as control for individual $b$ at $t_2$, making the pair invalid.
\par\medskip
Under certain circumstances, it is also possible to relax the requirement that $|t_1 - t_2| > \tau$. Consider the example shown in figure~\ref{fig::pair_overlap}. Here, the exposure times of individuals $a$ and $b$ are $100$ and $120$, respectively. Therefore, with $\tau = 40$, the risk periods of both individuals overlap during $[120, 140]$, making it impossible to use the full pair. However, we may use the non-overlapping regions to form the pair instead. Then, we would use the truncated periods of $[t_1, t_2]$ and $[t_1 + \tau, t_2]$ instead. This works as long as $t_1 \neq t_2$. Even using these truncated durations, the structural identity derived in appendix~\ref{appendix::theta} still holds, under the stated assumptions.
\par\medskip

\begin{figure}[!htb]
	\centering
	\includegraphics[width=1\linewidth]{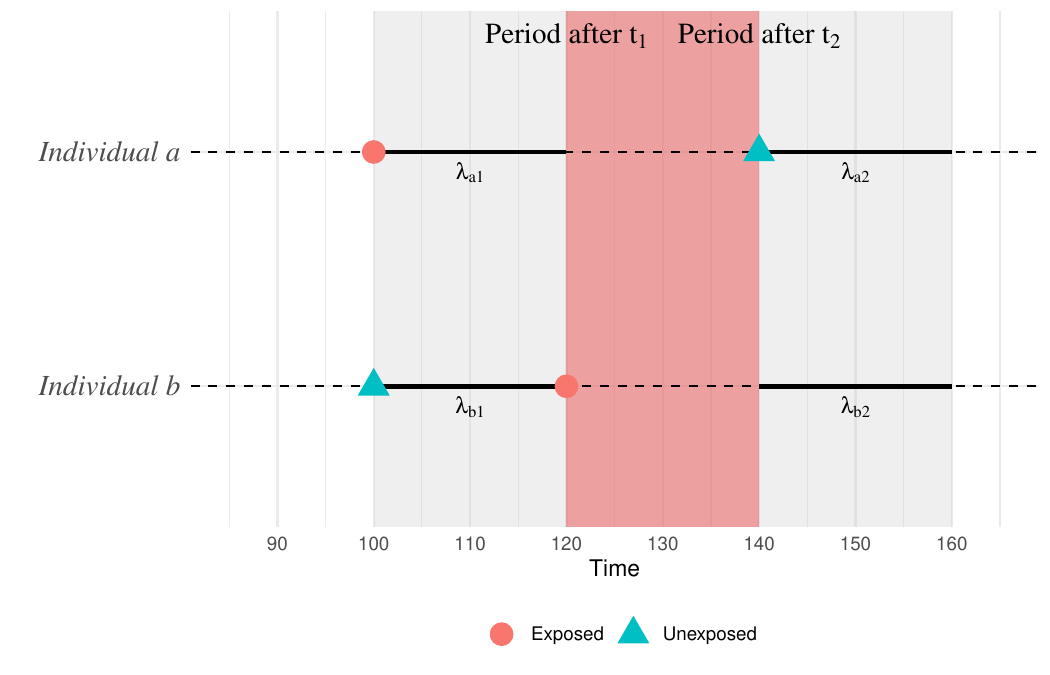}
	\caption{A graphical depiction of an exemplary valid pair $(a, b)$ with overlapping risk periods in the proposed symmetric pair matching design. Here, individual $a$ was exposed at $t_1 = 100$ and individual $b$ was exposed at $t_2 = 120$ with a risk period duration of $\tau = 40$. $\lambda_{a1}$, $\lambda_{a2}$, $\lambda_{b1}$ and $\lambda_{b2}$ denote the corresponding individual and time period specific event rates.}
	\label{fig::pair_overlap}
\end{figure}

Using these ``partial'' pairs may lead to a substantial gain in statistical efficiency when the observation period is short or if $\tau$ is large, relative to the observation period. In these cases, allowing partial pairs may substantially increase the number of available pairs. In usual applications with relatively short $\tau$, it will likely not have a substantial impact. The \texttt{SPMD} R package allows users to choose whether to use such pairs or not.

\FloatBarrier
\clearpage
\newpage

\section{Multiple Exposure Times per Individual} \label{appendix::exposure_times}

Throughout the main text and the consistency proof given in~\ref{appendix::consistency}, we assumed that each individual experiences at most one exposure instance. This assumption was made solely to simplify the notation and the dependency structure. In practice, the estimator naturally extends to multiple exposure periods by treating each exposure instance separately. Let:

\begin{equation}
	R_i = \bigcup_{k = 1}^{m_i} [T_{ik}, T_{ik} + \tau]
\end{equation}

denote the union of all exposure risk periods for individual $i$. A pair of exposure periods $(i, k)$ and $(j, l)$ is considered valid whenever:

\begin{equation}
	[T_{ik}, T_{ik} + \tau] \cap R_j = \emptyset,
\end{equation}

and

\begin{equation}
	[T_{jl}, T_{jl} + \tau] \cap R_i = \emptyset.
\end{equation}

These conditions ensure that both observation periods used as controls are entirely outside all exposure risk periods of the paired individual. Under the same multiplicative rate model used throughout the main text, the structural identity underlying the estimator continues to hold for every valid pair of exposure periods. Consider the example shown in figure~\ref{fig::pair_multiple}. Here, individual $a$ was exposed twice: once at $t = 10$ and once at $t = 70$. Individual $b$ was only exposed once at $t = 130$. Using just those two individuals, we could form the following two pairs of exposure risk periods:

\begin{equation}
	([10, 50], [130, 170]), \quad ([70, 110], [130, 170]),
\end{equation}

using the exposure risk period of individual $b$ twice. Had individual $b$ been exposed at, for example, $t = 20$, we would not be able to form the first pairing, but we would still be able to form the second one.
\par\medskip

\begin{figure}[!htb]
	\centering
	\includegraphics[width=1\linewidth]{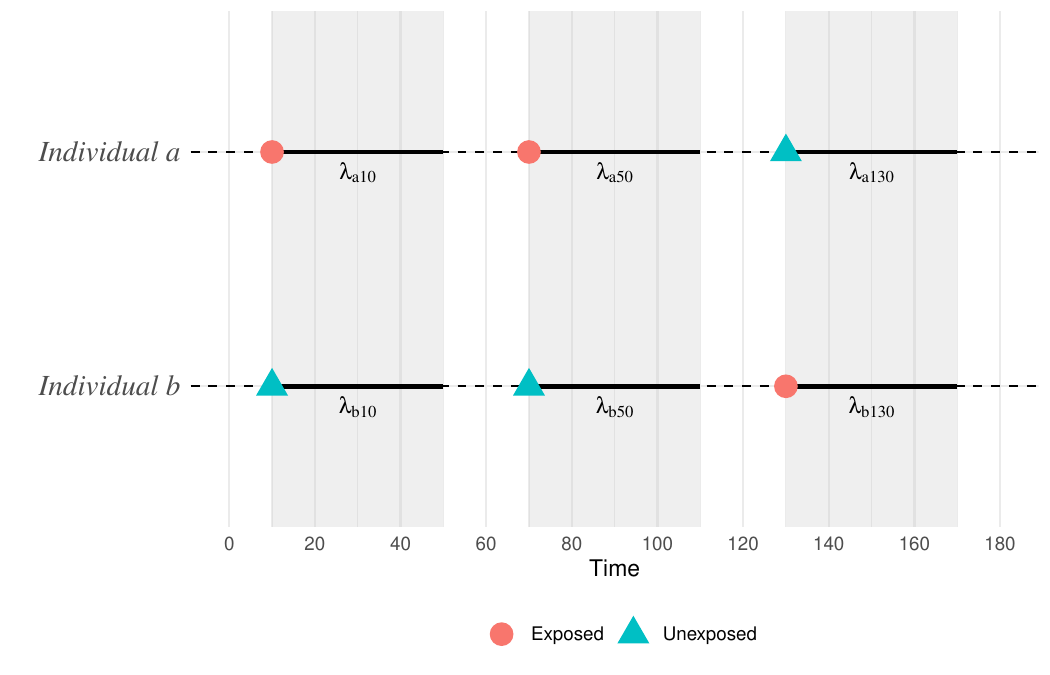}
	\caption{A graphical depiction of two individuals, $a$ and $b$. Here, individual $a$ was exposed at $t = 10$ and $t = 60$ and individual $b$ was exposed at $t = 130$ with a risk period duration of $\tau = 40$. The $\lambda$s denote the corresponding individual and time period specific event rates.}
	\label{fig::pair_multiple}
\end{figure}

Extending the consistency proof to this more general setting is primarily a technical problem, because repeated exposure episodes from the same individual introduce additional dependence between observations. We therefore present the proof only for the simpler single-exposure setting. The associated R package nevertheless supports multiple exposure periods per individual. This extension may also be applied using the relaxed validity criteria for pairs discussed in appendix~\ref{appendix::valid_pairs}. 

\FloatBarrier
\clearpage
\newpage

\section{Variance Estimation} \label{appendix::bootstrap}

In usual applications (e.g. when using all valid pairs or a large random subset thereof), individuals may contribute to multiple pairs simultaneously. Naive approximations of the variance ignoring these dependencies would underestimate the true variance. Instead, we recommend using person-level non-parametric bootstrapping. Here, $m$ samples of individuals of size $n$ are drawn with replacement from the original sample. The method is then applied to all $m$ samples. The standard deviation of the obtained point estimates may then be used to approximate the standard error of the point estimate, which may then be used to estimate 95\% confidence intervals using the normal approximation.
\par
When using all possible and valid pairs, this method may be computationally expensive, because the number of possible pairs grows by $\binom{n}{2}$. In some cases it may be feasible to obtain a point estimate, but infeasible to run thousands of naively generated bootstrap samples. Luckily, there is a computational trick that may be used to avoid creating all pairs on every bootstrap replication separately. Because a pair of two individuals $(a, b)$ is defined to be valid only if the periods defined by their respective exposure times $t_1$ and $t_2$ and $\tau$ are fully observed and do not overlap, a bootstrap sample cannot actually generate ``new'' pairs, that are not already included in the set of all pairs. It only re-weights existing pairs.
\par\medskip
We may therefore generate the set of all valid pairs once and then randomly draw weights for each pair from an appropriate distribution representing sampling without replacement. A weighted estimator may then be applied. Given $n$ inidividuals, the bootstrap weights are distributed as:

\begin{equation}
	(W_1, ..., W_n) \sim \text{Multinomial}\left(n; \frac{1}{n}, ..., \frac{1}{n}\right).
\end{equation}

In practice one may simply generate $n$ independent draws from a Poisson distribution with $\lambda = 1$ for $W_1, ..., W_n$ as an approximation. Every pair $(a, b)$ thus contributes weight $W_a W_b$, resulting in a weighted estimator of the form:

\begin{equation}
	\hat{\theta}_n^{boot} = \frac{1}{2} \ln\left(\frac{\sum_{(a, b) \in \mathcal{P}_n} W_a W_b X_{b1} X_{a2}}{\sum_{(a, b) \in \mathcal{P}_n} W_a W_b X_{a1} X_{b2}}\right),
\end{equation}

where $\mathcal{P}_n$ is the set of all valid pairs. By repeatedly drawing weights and applying the weighted estimator, multiple bootstrap estimates are created without the need for the creation of new pairs. The \texttt{SPMD} R package directly implements this.

\newpage

\section{Additional Simulation Results} \label{appendix::simulation}

The simulation study described in section~\ref{chap::simulation} of the main text was additionally repeated for different values of $\theta$ and with different sample sizes. The results are given below.

\begin{figure}[!htb]
	\centering
	\includegraphics[width=1\linewidth]{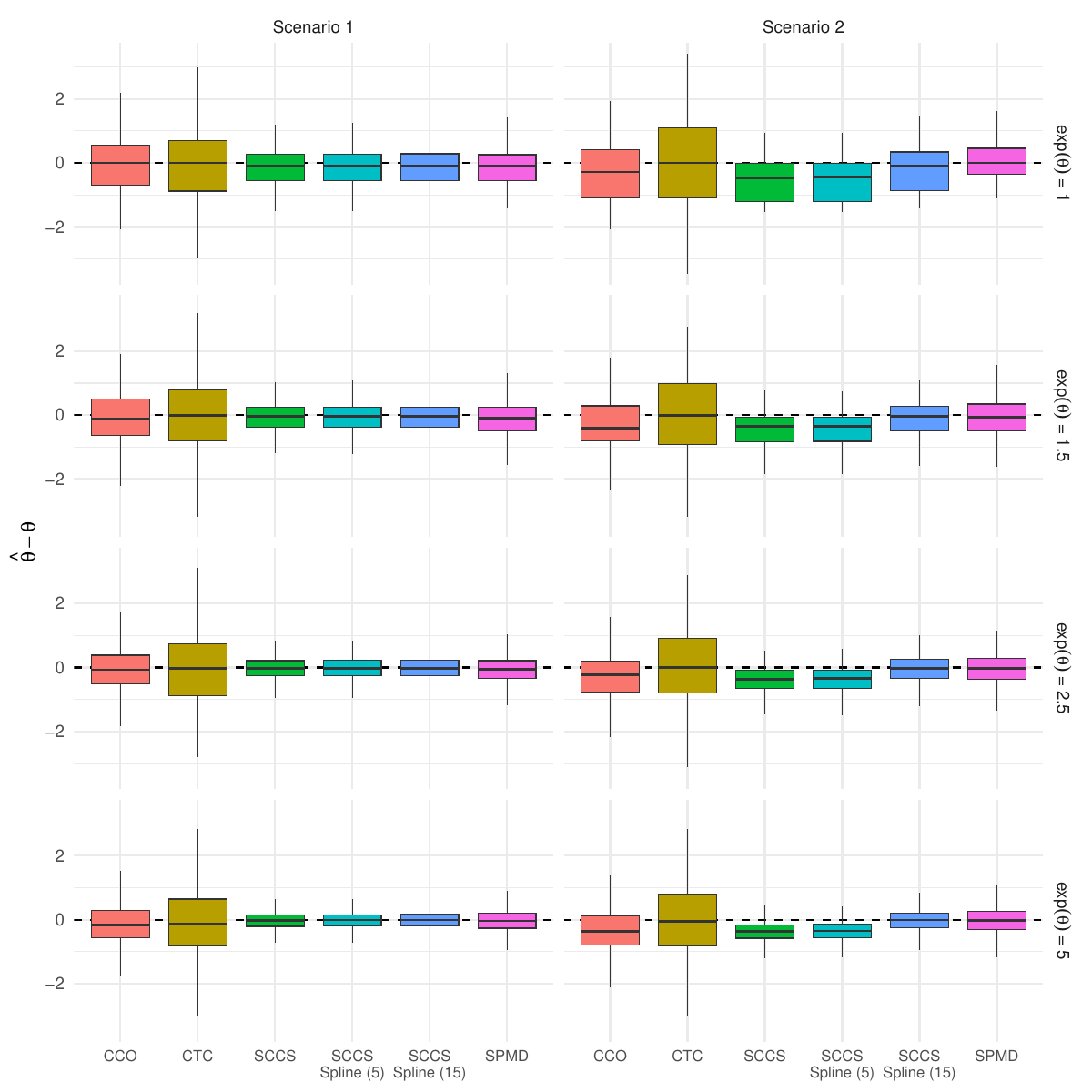}
	\caption{Boxplots of the deviations between the estimated and true $\theta$ for all considered methods, simulation scenarios and different true values of $\theta$. A sample size of $n = 10000$ was used with $1000$ simulation replications. Outliers are not shown.}
	\label{fig::sim_boxplots_n10000}
\end{figure}

\begin{figure}[!htb]
	\centering
	\includegraphics[width=1\linewidth]{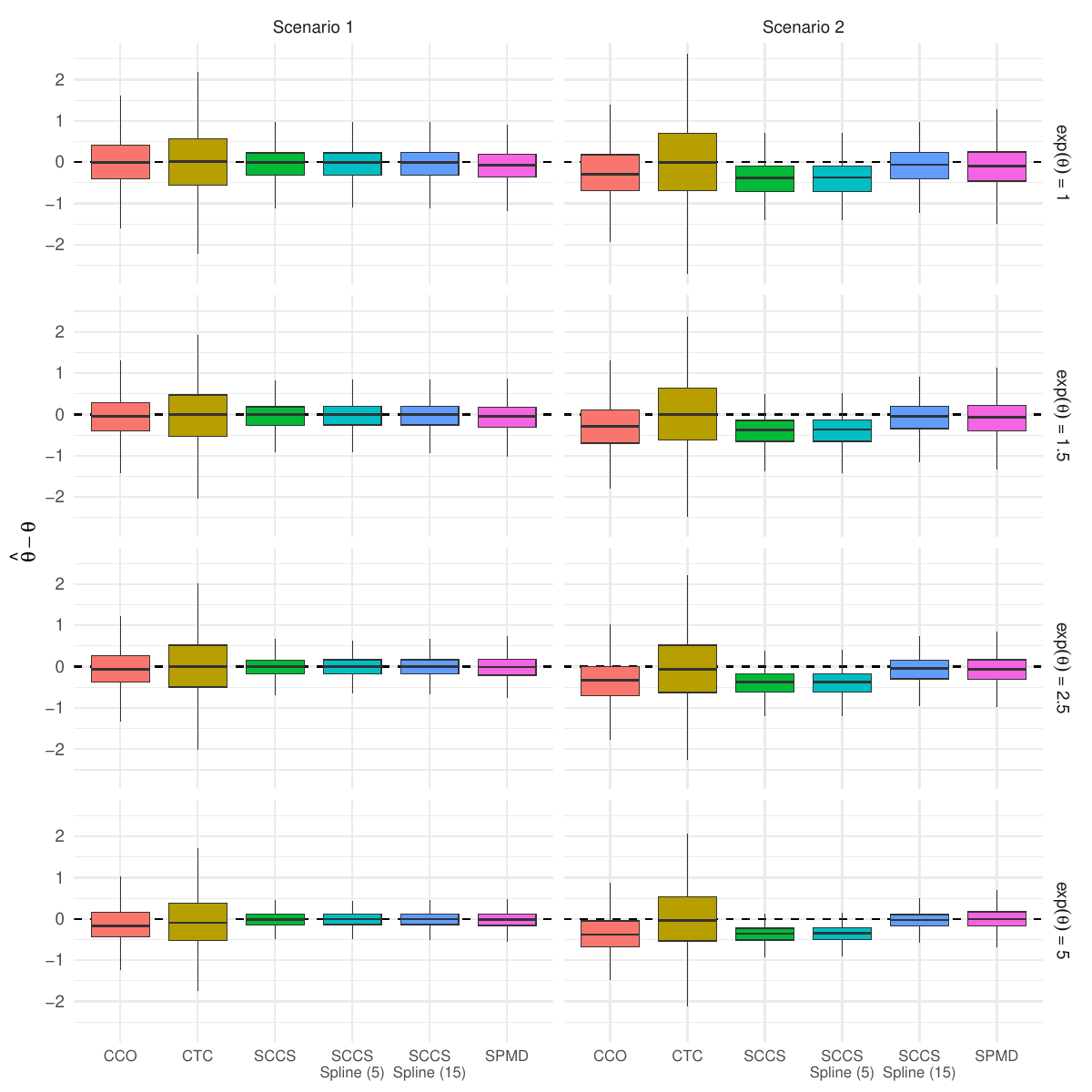}
	\caption{Boxplots of the deviations between the estimated and true $\theta$ for all considered methods, simulation scenarios and different true values of $\theta$. A sample size of $n = 20000$ was used with $1000$ simulation replications. Outliers are not shown.}
	\label{fig::sim_boxplots_n20000}
\end{figure}

\begin{figure}[!htb]
	\centering
	\includegraphics[width=1\linewidth]{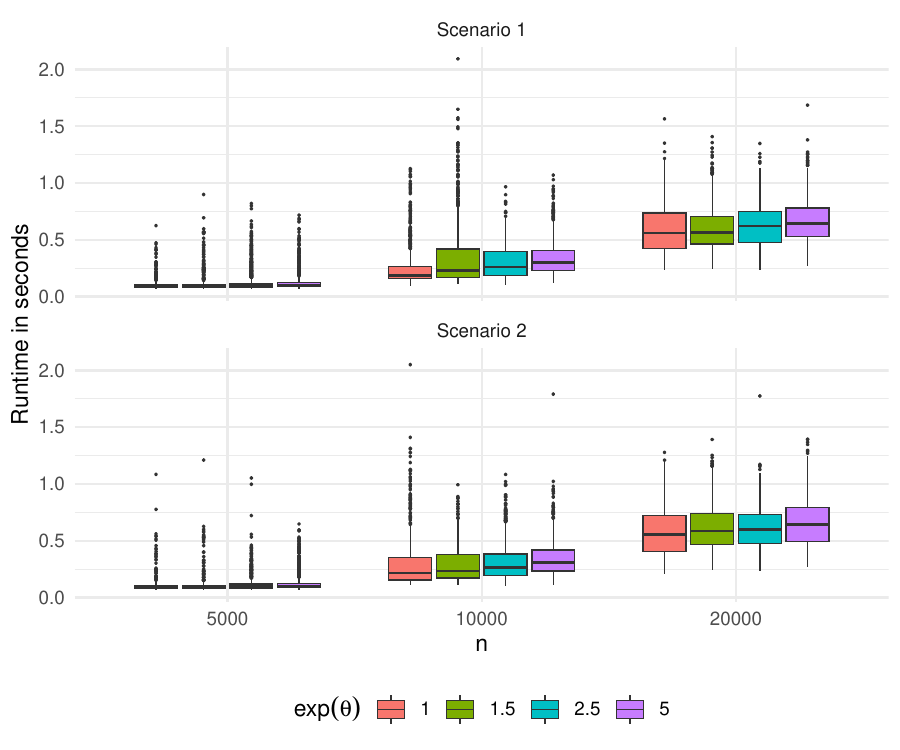}
	\caption{Boxplots showing the runtime of the symmetric pair matching implementation for all considered simulation scenarios and parameters.}
	\label{fig::spmd_runtime}
\end{figure}

\FloatBarrier

{\footnotesize
	\begin{longtable}{rllrrrrr}
		\toprule
		$\exp(\theta)$ & Scenario & Estimator & Bias & MCSE (Bias) & MSE & MCSE (MSE) & $n$ NA\\
		\midrule
		1.0 & Scenario 1 & Cox & -2.795 & 0.190 & 44.033 & 2.895 & 0\\
		1.0 & Scenario 1 & CCO & -0.602 & 0.372 & 138.291 & 6.099 & 38\\
		1.0 & Scenario 1 & CTC & -0.745 & 0.447 & 200.500 & 9.368 & 77\\
		1.0 & Scenario 1 & SCCS & -3.552 & 0.238 & 69.177 & 4.500 & 0\\
		1.0 & Scenario 1 & SCCS
		Spline (5) & -3.109 & 0.210 & 53.514 & 3.483 & 0\\
		1.0 & Scenario 1 & SCCS
		Spline (15) & -3.120 & 0.210 & 53.978 & 3.512 & 0\\
		1.0 & Scenario 1 & SPMD & 0.084 & 0.015 & 0.221 & 0.009 & 601\\
		1.0 & Scenario 2 & Cox & -4.428 & 0.225 & 70.009 & 3.395 & 0\\
		1.0 & Scenario 2 & CCO & -2.408 & 0.415 & 177.870 & 6.395 & 83\\
		1.0 & Scenario 2 & CTC & -0.975 & 0.523 & 274.364 & 11.985 & 146\\
		1.0 & Scenario 2 & SCCS & -5.838 & 0.275 & 109.400 & 5.262 & 0\\
		1.0 & Scenario 2 & SCCS
		Spline (5) & -5.151 & 0.242 & 85.058 & 4.096 & 0\\
		1.0 & Scenario 2 & SCCS
		Spline (15) & -4.920 & 0.246 & 84.421 & 4.056 & 0\\
		1.0 & Scenario 2 & SPMD & 0.095 & 0.013 & 0.172 & 0.008 & 830\\
		1.5 & Scenario 1 & Cox & -0.966 & 0.118 & 14.948 & 1.850 & 0\\
		1.5 & Scenario 1 & CCO & 1.659 & 0.283 & 82.574 & 5.114 & 18\\
		1.5 & Scenario 1 & CTC & 2.085 & 0.407 & 169.856 & 8.544 & 50\\
		1.5 & Scenario 1 & SCCS & -1.262 & 0.150 & 24.138 & 2.907 & 0\\
		1.5 & Scenario 1 & SCCS
		Spline (5) & -1.111 & 0.133 & 18.779 & 2.257 & 0\\
		1.5 & Scenario 1 & SCCS
		Spline (15) & -1.108 & 0.133 & 18.875 & 2.267 & 0\\
		1.5 & Scenario 1 & SPMD & -0.096 & 0.016 & 0.281 & 0.011 & 402\\
		1.5 & Scenario 2 & Cox & -2.431 & 0.181 & 38.664 & 2.833 & 0\\
		1.5 & Scenario 2 & CCO & 0.044 & 0.373 & 139.326 & 6.107 & 35\\
		1.5 & Scenario 2 & CTC & 2.242 & 0.487 & 242.058 & 11.506 & 95\\
		1.5 & Scenario 2 & SCCS & -3.392 & 0.222 & 60.903 & 4.385 & 0\\
		1.5 & Scenario 2 & SCCS
		Spline (5) & -3.032 & 0.197 & 47.999 & 3.454 & 0\\
		1.5 & Scenario 2 & SCCS
		Spline (15) & -2.735 & 0.199 & 47.174 & 3.403 & 0\\
		1.5 & Scenario 2 & SPMD & -0.182 & 0.015 & 0.256 & 0.009 & 720\\
		2.5 & Scenario 1 & Cox & -0.182 & 0.043 & 1.902 & 0.646 & 0\\
		2.5 & Scenario 1 & CCO & 2.419 & 0.240 & 63.161 & 4.448 & 3\\
		2.5 & Scenario 1 & CTC & 2.346 & 0.384 & 152.590 & 7.658 & 31\\
		2.5 & Scenario 1 & SCCS & -0.247 & 0.059 & 3.511 & 1.123 & 0\\
		2.5 & Scenario 1 & SCCS
		Spline (5) & -0.224 & 0.053 & 2.897 & 0.905 & 0\\
		2.5 & Scenario 1 & SCCS
		Spline (15) & -0.219 & 0.053 & 2.905 & 0.906 & 0\\
		2.5 & Scenario 1 & SPMD & -0.214 & 0.018 & 0.355 & 0.016 & 306\\
		2.5 & Scenario 2 & Cox & -0.974 & 0.116 & 14.327 & 1.869 & 0\\
		2.5 & Scenario 2 & CCO & 1.788 & 0.296 & 90.713 & 5.204 & 17\\
		2.5 & Scenario 2 & CTC & 3.791 & 0.449 & 215.488 & 10.128 & 69\\
		2.5 & Scenario 2 & SCCS & -1.537 & 0.142 & 22.450 & 2.871 & 0\\
		2.5 & Scenario 2 & SCCS
		Spline (5) & -1.404 & 0.126 & 17.818 & 2.267 & 0\\
		2.5 & Scenario 2 & SCCS
		Spline (15) & -1.088 & 0.127 & 17.351 & 2.225 & 0\\
		2.5 & Scenario 2 & SPMD & -0.433 & 0.017 & 0.473 & 0.018 & 604\\
		5.0 & Scenario 1 & Cox & -0.062 & 0.013 & 0.163 & 0.011 & 0\\
		5.0 & Scenario 1 & CCO & 2.408 & 0.210 & 50.012 & 3.836 & 0\\
		5.0 & Scenario 1 & CTC & 2.535 & 0.352 & 129.898 & 6.967 & 25\\
		5.0 & Scenario 1 & SCCS & -0.069 & 0.013 & 0.173 & 0.012 & 0\\
		5.0 & Scenario 1 & SCCS
		Spline (5) & -0.061 & 0.013 & 0.176 & 0.012 & 0\\
		5.0 & Scenario 1 & SCCS
		Spline (15) & -0.051 & 0.013 & 0.179 & 0.013 & 0\\
		5.0 & Scenario 1 & SPMD & -0.186 & 0.015 & 0.270 & 0.016 & 232\\
		5.0 & Scenario 2 & Cox & -0.132 & 0.023 & 0.530 & 0.289 & 0\\
		5.0 & Scenario 2 & CCO & 2.723 & 0.234 & 62.229 & 4.188 & 1\\
		5.0 & Scenario 2 & CTC & 4.027 & 0.407 & 181.887 & 9.330 & 60\\
		5.0 & Scenario 2 & SCCS & -0.482 & 0.025 & 0.882 & 0.427 & 0\\
		5.0 & Scenario 2 & SCCS
		Spline (5) & -0.470 & 0.023 & 0.760 & 0.311 & 0\\
		5.0 & Scenario 2 & SCCS
		Spline (15) & -0.140 & 0.024 & 0.604 & 0.322 & 0\\
		5.0 & Scenario 2 & SPMD & -0.453 & 0.016 & 0.475 & 0.022 & 468\\
		\bottomrule
		\caption{Average bias and mean-squared error (MSE), as well as their associated Monte-Carlo standard errors and the number of infinite or missing estimates ($n$ NA) for every simulation scenario, estimator and different values of $\theta$. A sample size of $n = 5000$ with $1000$ simulation replications was used throughout. Infinite or missing estimates were removed before further calculations.}
	\end{longtable}
}

\newpage

{\footnotesize
	\begin{longtable}{rllrrrrr}
		\toprule
		$\exp(\theta)$ & Scenario & Estimator & Bias & MCSE (Bias) & MSE & MCSE (MSE) & $n$ NA\\
		\midrule
		1.0 & Scenario 1 & Cox & -0.430 & 0.068 & 4.809 & 1.016 & 0\\
		1.0 & Scenario 1 & CCO & -0.042 & 0.166 & 27.592 & 3.203 & 1\\
		1.0 & Scenario 1 & CTC & -0.051 & 0.208 & 43.177 & 4.088 & 2\\
		1.0 & Scenario 1 & SCCS & -0.522 & 0.086 & 7.602 & 1.606 & 0\\
		1.0 & Scenario 1 & SCCS
		Spline (5) & -0.472 & 0.076 & 5.951 & 1.242 & 0\\
		1.0 & Scenario 1 & SCCS
		Spline (15) & -0.473 & 0.076 & 6.023 & 1.257 & 0\\
		1.0 & Scenario 1 & SPMD & -0.125 & 0.018 & 0.323 & 0.012 & 144\\
		1.0 & Scenario 2 & Cox & -1.447 & 0.140 & 21.712 & 2.138 & 0\\
		1.0 & Scenario 2 & CCO & -1.396 & 0.251 & 65.017 & 4.702 & 6\\
		1.0 & Scenario 2 & CTC & -0.451 & 0.298 & 88.757 & 5.979 & 12\\
		1.0 & Scenario 2 & SCCS & -2.173 & 0.174 & 34.923 & 3.361 & 0\\
		1.0 & Scenario 2 & SCCS
		Spline (5) & -1.955 & 0.153 & 27.355 & 2.626 & 0\\
		1.0 & Scenario 2 & SCCS
		Spline (15) & -1.649 & 0.155 & 26.821 & 2.585 & 0\\
		1.0 & Scenario 2 & SPMD & 0.053 & 0.017 & 0.289 & 0.011 & 340\\
		1.5 & Scenario 1 & Cox & -0.159 & 0.035 & 1.246 & 0.495 & 0\\
		1.5 & Scenario 1 & CCO & 0.282 & 0.106 & 11.307 & 2.033 & 0\\
		1.5 & Scenario 1 & CTC & 0.314 & 0.168 & 28.417 & 3.119 & 2\\
		1.5 & Scenario 1 & SCCS & -0.180 & 0.042 & 1.775 & 0.755 & 0\\
		1.5 & Scenario 1 & SCCS
		Spline (5) & -0.168 & 0.038 & 1.478 & 0.608 & 0\\
		1.5 & Scenario 1 & SCCS
		Spline (15) & -0.165 & 0.038 & 1.478 & 0.607 & 0\\
		1.5 & Scenario 1 & SPMD & -0.131 & 0.018 & 0.330 & 0.015 & 36\\
		1.5 & Scenario 2 & Cox & -0.456 & 0.078 & 6.237 & 1.199 & 0\\
		1.5 & Scenario 2 & CCO & -0.157 & 0.181 & 32.771 & 3.471 & 1\\
		1.5 & Scenario 2 & CTC & 0.929 & 0.263 & 69.701 & 5.007 & 5\\
		1.5 & Scenario 2 & SCCS & -0.910 & 0.095 & 9.930 & 1.867 & 0\\
		1.5 & Scenario 2 & SCCS
		Spline (5) & -0.849 & 0.085 & 7.962 & 1.480 & 0\\
		1.5 & Scenario 2 & SCCS
		Spline (15) & -0.525 & 0.086 & 7.654 & 1.450 & 0\\
		1.5 & Scenario 2 & SPMD & -0.088 & 0.018 & 0.346 & 0.014 & 158\\
		2.5 & Scenario 1 & Cox & -0.054 & 0.012 & 0.141 & 0.009 & 0\\
		2.5 & Scenario 1 & CCO & 0.433 & 0.100 & 10.135 & 1.871 & 0\\
		2.5 & Scenario 1 & CTC & 0.552 & 0.173 & 30.294 & 3.182 & 1\\
		2.5 & Scenario 1 & SCCS & -0.059 & 0.012 & 0.150 & 0.010 & 0\\
		2.5 & Scenario 1 & SCCS
		Spline (5) & -0.056 & 0.012 & 0.151 & 0.010 & 0\\
		2.5 & Scenario 1 & SCCS
		Spline (15) & -0.052 & 0.012 & 0.153 & 0.010 & 0\\
		2.5 & Scenario 1 & SPMD & -0.073 & 0.015 & 0.221 & 0.012 & 8\\
		2.5 & Scenario 2 & Cox & -0.066 & 0.014 & 0.212 & 0.014 & 0\\
		2.5 & Scenario 2 & CCO & 0.496 & 0.126 & 16.216 & 2.357 & 0\\
		2.5 & Scenario 2 & CTC & 1.414 & 0.244 & 61.491 & 4.823 & 4\\
		2.5 & Scenario 2 & SCCS & -0.439 & 0.025 & 0.793 & 0.399 & 0\\
		2.5 & Scenario 2 & SCCS
		Spline (5) & -0.429 & 0.023 & 0.711 & 0.322 & 0\\
		2.5 & Scenario 2 & SCCS
		Spline (15) & -0.104 & 0.023 & 0.544 & 0.310 & 0\\
		2.5 & Scenario 2 & SPMD & -0.092 & 0.017 & 0.308 & 0.015 & 73\\
		5.0 & Scenario 1 & Cox & -0.035 & 0.008 & 0.072 & 0.004 & 0\\
		5.0 & Scenario 1 & CCO & 0.393 & 0.100 & 10.063 & 1.802 & 0\\
		5.0 & Scenario 1 & CTC & 0.216 & 0.153 & 23.464 & 2.903 & 2\\
		5.0 & Scenario 1 & SCCS & -0.047 & 0.009 & 0.081 & 0.006 & 0\\
		5.0 & Scenario 1 & SCCS
		Spline (5) & -0.038 & 0.009 & 0.081 & 0.005 & 0\\
		5.0 & Scenario 1 & SCCS
		Spline (15) & -0.033 & 0.009 & 0.082 & 0.005 & 0\\
		5.0 & Scenario 1 & SPMD & -0.028 & 0.012 & 0.133 & 0.008 & 3\\
		5.0 & Scenario 2 & Cox & -0.038 & 0.010 & 0.105 & 0.006 & 0\\
		5.0 & Scenario 2 & CCO & 0.234 & 0.104 & 10.876 & 1.863 & 0\\
		5.0 & Scenario 2 & CTC & 1.095 & 0.193 & 38.353 & 3.813 & 4\\
		5.0 & Scenario 2 & SCCS & -0.389 & 0.010 & 0.259 & 0.012 & 0\\
		5.0 & Scenario 2 & SCCS
		Spline (5) & -0.380 & 0.010 & 0.253 & 0.012 & 0\\
		5.0 & Scenario 2 & SCCS
		Spline (15) & -0.049 & 0.011 & 0.127 & 0.007 & 0\\
		5.0 & Scenario 2 & SPMD & -0.034 & 0.015 & 0.213 & 0.011 & 73\\
		\bottomrule
		\caption{Average bias and mean-squared error (MSE), as well as their associated Monte-Carlo standard errors and the number of infinite or missing estimates ($n$ NA) for every simulation scenario, estimator and different values of $\theta$. A sample size of $n = 10000$ with $1000$ simulation replications was used throughout. Infinite or missing estimates were removed before further calculations.}
	\end{longtable}
}

\newpage

{\footnotesize
	\begin{longtable}{rllrrrrr}
		\toprule
		$\exp(\theta)$ & Scenario & Estimator & Bias & MCSE (Bias) & MSE & MCSE (MSE) & $n$ NA\\
		\midrule
		1.0 & Scenario 1 & Cox & -0.064 & 0.014 & 0.191 & 0.013 & 0\\
		1.0 & Scenario 1 & CCO & -0.004 & 0.028 & 0.797 & 0.408 & 0\\
		1.0 & Scenario 1 & CTC & 0.021 & 0.034 & 1.122 & 0.341 & 0\\
		1.0 & Scenario 1 & SCCS & -0.075 & 0.014 & 0.203 & 0.014 & 0\\
		1.0 & Scenario 1 & SCCS
		Spline (5) & -0.072 & 0.014 & 0.202 & 0.014 & 0\\
		1.0 & Scenario 1 & SCCS
		Spline (15) & -0.072 & 0.014 & 0.203 & 0.014 & 0\\
		1.0 & Scenario 1 & SPMD & -0.128 & 0.015 & 0.235 & 0.014 & 7\\
		1.0 & Scenario 2 & Cox & -0.189 & 0.040 & 1.670 & 0.573 & 0\\
		1.0 & Scenario 2 & CCO & -0.449 & 0.078 & 6.305 & 1.515 & 0\\
		1.0 & Scenario 2 & CTC & -0.042 & 0.095 & 9.015 & 1.552 & 0\\
		1.0 & Scenario 2 & SCCS & -0.562 & 0.048 & 2.640 & 0.886 & 0\\
		1.0 & Scenario 2 & SCCS
		Spline (5) & -0.541 & 0.043 & 2.134 & 0.683 & 0\\
		1.0 & Scenario 2 & SCCS
		Spline (15) & -0.221 & 0.044 & 1.959 & 0.681 & 0\\
		1.0 & Scenario 2 & SPMD & -0.127 & 0.017 & 0.304 & 0.014 & 42\\
		1.5 & Scenario 1 & Cox & -0.044 & 0.011 & 0.117 & 0.008 & 0\\
		1.5 & Scenario 1 & CCO & 0.003 & 0.027 & 0.705 & 0.392 & 0\\
		1.5 & Scenario 1 & CTC & 0.007 & 0.035 & 1.245 & 0.442 & 0\\
		1.5 & Scenario 1 & SCCS & -0.069 & 0.022 & 0.502 & 0.378 & 0\\
		1.5 & Scenario 1 & SCCS
		Spline (5) & -0.062 & 0.020 & 0.393 & 0.270 & 0\\
		1.5 & Scenario 1 & SCCS
		Spline (15) & -0.062 & 0.020 & 0.394 & 0.270 & 0\\
		1.5 & Scenario 1 & SPMD & -0.077 & 0.012 & 0.147 & 0.008 & 1\\
		1.5 & Scenario 2 & Cox & -0.075 & 0.014 & 0.189 & 0.013 & 0\\
		1.5 & Scenario 2 & CCO & -0.239 & 0.040 & 1.652 & 0.678 & 0\\
		1.5 & Scenario 2 & CTC & 0.140 & 0.064 & 4.067 & 1.017 & 0\\
		1.5 & Scenario 2 & SCCS & -0.434 & 0.014 & 0.376 & 0.021 & 0\\
		1.5 & Scenario 2 & SCCS
		Spline (5) & -0.428 & 0.014 & 0.371 & 0.021 & 0\\
		1.5 & Scenario 2 & SCCS
		Spline (15) & -0.105 & 0.014 & 0.204 & 0.014 & 0\\
		1.5 & Scenario 2 & SPMD & -0.123 & 0.015 & 0.251 & 0.014 & 9\\
		2.5 & Scenario 1 & Cox & -0.014 & 0.008 & 0.060 & 0.003 & 0\\
		2.5 & Scenario 1 & CCO & -0.035 & 0.016 & 0.242 & 0.012 & 0\\
		2.5 & Scenario 1 & CTC & 0.039 & 0.035 & 1.238 & 0.435 & 0\\
		2.5 & Scenario 1 & SCCS & -0.018 & 0.008 & 0.064 & 0.003 & 0\\
		2.5 & Scenario 1 & SCCS
		Spline (5) & -0.013 & 0.008 & 0.065 & 0.003 & 0\\
		2.5 & Scenario 1 & SCCS
		Spline (15) & -0.011 & 0.008 & 0.065 & 0.003 & 0\\
		2.5 & Scenario 1 & SPMD & -0.019 & 0.009 & 0.081 & 0.004 & 0\\
		2.5 & Scenario 2 & Cox & -0.059 & 0.010 & 0.102 & 0.006 & 0\\
		2.5 & Scenario 2 & CCO & -0.283 & 0.038 & 1.543 & 0.643 & 0\\
		2.5 & Scenario 2 & CTC & 0.125 & 0.079 & 6.188 & 1.297 & 0\\
		2.5 & Scenario 2 & SCCS & -0.412 & 0.010 & 0.274 & 0.011 & 0\\
		2.5 & Scenario 2 & SCCS
		Spline (5) & -0.405 & 0.010 & 0.269 & 0.011 & 0\\
		2.5 & Scenario 2 & SCCS
		Spline (15) & -0.083 & 0.011 & 0.117 & 0.006 & 0\\
		2.5 & Scenario 2 & SPMD & -0.082 & 0.012 & 0.150 & 0.009 & 0\\
		5.0 & Scenario 1 & Cox & -0.018 & 0.006 & 0.032 & 0.002 & 0\\
		5.0 & Scenario 1 & CCO & -0.119 & 0.015 & 0.238 & 0.013 & 0\\
		5.0 & Scenario 1 & CTC & -0.077 & 0.023 & 0.524 & 0.028 & 0\\
		5.0 & Scenario 1 & SCCS & -0.022 & 0.006 & 0.035 & 0.002 & 0\\
		5.0 & Scenario 1 & SCCS
		Spline (5) & -0.017 & 0.006 & 0.035 & 0.002 & 0\\
		5.0 & Scenario 1 & SCCS
		Spline (15) & -0.014 & 0.006 & 0.035 & 0.002 & 0\\
		5.0 & Scenario 1 & SPMD & -0.018 & 0.007 & 0.049 & 0.002 & 0\\
		5.0 & Scenario 2 & Cox & -0.016 & 0.006 & 0.042 & 0.002 & 0\\
		5.0 & Scenario 2 & CCO & -0.338 & 0.016 & 0.366 & 0.015 & 0\\
		5.0 & Scenario 2 & CTC & 0.198 & 0.067 & 4.463 & 1.094 & 0\\
		5.0 & Scenario 2 & SCCS & -0.372 & 0.007 & 0.183 & 0.006 & 0\\
		5.0 & Scenario 2 & SCCS
		Spline (5) & -0.364 & 0.007 & 0.178 & 0.006 & 0\\
		5.0 & Scenario 2 & SCCS
		Spline (15) & -0.036 & 0.007 & 0.051 & 0.003 & 0\\
		5.0 & Scenario 2 & SPMD & 0.004 & 0.009 & 0.079 & 0.004 & 1\\
		\bottomrule
		\caption{Average bias and mean-squared error (MSE), as well as their associated Monte-Carlo standard errors and the number of infinite or missing estimates ($n$ NA) for every simulation scenario, estimator and different values of $\theta$. A sample size of $n = 5000$ with $2000$ simulation replications was used throughout. Infinite or missing estimates were removed before further calculations.}
	\end{longtable}
}

\end{document}